\documentclass[11pt]{amsart}

\usepackage{amsmath, amssymb, amsthm, amscd}
\usepackage{url}
\usepackage{graphicx}
\usepackage[hidelinks,pagebackref,pdftex]{hyperref}
\usepackage{color}
\usepackage{import}
\usepackage[letterpaper, margin=1 in]{geometry}
\usepackage{caption}
\usepackage{bbold}
\usepackage{subcaption}
\usepackage{algorithm}
\usepackage{algpseudocode}

\bibliographystyle{plainurl}

\usepackage{amsmath}
\usepackage{old-arrows}
\usepackage{comment}
\usepackage{multirow}
\usepackage{tikz-cd}
\usepackage{enumitem}
\usepackage{stmaryrd}

\newcommand{\Z}{\mathbb{Z}}
\newcommand{\N}{\mathbb{N}}

\newenvironment{localsize}[1]
{%
  \clearpage
  \let\orignewcommand\newcommand
  \let\newcommand\renewcommand
  \makeatletter
  \input{bk#1.clo}%
  \makeatother
  \let\newcommand\orignewcommand
}

\DeclareMathOperator{\Mod}{Mod}
\DeclareMathOperator{\poly}{poly}


\usepackage{marginnote}
\long\def\@savemarbox#1#2{\global\setbox#1\vtop{\hsize\marginparwidth 
  \@parboxrestore\tiny\raggedright #2}}
\marginparwidth 0.75in \marginparsep 7pt 


\renewcommand*{\backref}[1]{}
\renewcommand*{\backrefalt}[4]{
  \ifcase #1
  [No citations.]
  \or [#2]
  \else [#2]
  \fi }

\AtBeginDocument{%
   \def\MR#1{}
}

\numberwithin{equation}{section}
\theoremstyle{plain}
\newtheorem{theorem}[equation]{Theorem}

\newtheorem{lemma}[equation]{Lemma}

\newtheorem*{namedtheorem}{\theoremname}
\newcommand{\theoremname}{testing}

\theoremstyle{definition}

\newtheorem{remark}[equation]{Remark}




\title[Compressed data structures for Heegaard splittings]{Compressed data structures for Heegaard splittings}

\author{Henrique Ennes}\thanks{INRIA d'Universit\'e C\^ote d'Azur, \texttt{henrique.lovisi-ennes@inria.fr}.}
\author{Cl\'ement Maria}\thanks{INRIA d'Universit\'e C\^ote d'Azur, \texttt{clement.maria@inria.fr}.}

\begin{document}

\pagenumbering{gobble}
\begin{abstract}
Heegaard splittings provide a natural representation of closed 3-manifolds by gluing two handlebodies along a common surface. 
These splittings can be equivalently given by two finite sets of meridians lying on the surface, which define a Heegaard diagram.
We present a data structure to effectively represent Heegaard diagrams as normal curves with respect to triangulations of a surface, where the complexity is measured by the space required to express the normal coordinates' vectors in binary.
This structure can be significantly more compact than triangulations of 3-manifolds, yielding exponential gains for certain families.
Even with this succinct definition of complexity, we establish polynomial-time algorithms for comparing and manipulating diagrams, performing stabilizations, detecting trivial stabilizations and reductions, and computing topological invariants of the underlying manifolds, such as their fundamental and homology groups.
We also contrast early implementations of our techniques with standard software programs for 3-manifolds, achieving faster algorithms for the average cases and exponential gains in speed for some particular presentations of the inputs.
\end{abstract}

\maketitle

\pagenumbering{arabic}
\section{Introduction}
Since the early days of computational topology, \emph{3-manifolds} have attracted researchers' keen interest \cite{haken1962homoomorphieproblem, jaco1984algorithm,lackenby2020algorithms3manifoldtheory} for lying exactly midway between the well-understood surfaces and the wildly behaved 4-manifolds.
For example, several 3-manifold problems were related to important questions in complexity theory: $S^3$ recognition was proved by Schleimer to be in $\texttt{NP}$ \cite{schleimer2011sphere}, whereas approximations of some quantum invariants are firmly tied to quantum computing and are known to be $\#\texttt{P}$-hard \cite{alagic2011quantum,alagic2014quantum}.
Similarly, the problems of classifying \cite{costantino2024learning, kuperberg2018identifying} and manipulating the geometry and topology of these spaces \cite{burton2013new, lackenby2024triangulation} have also received significant theoretical and practical attention, with implementations of different methods publicly available in the most important software of the field, \texttt{regina} \cite{regina} and \texttt{SnapPy} \cite{SnapPy}. 

The algorithmic point of view is particularly relevant for the construction of \emph{censuses} of homomorphism classes of 3-manifolds \cite{burton2004efficient, burton2014cusped, Burton2020_TabulationKnots3Manifolds, burton2015edge}. 
Drawing inspiration from the related problem of knot classification (this time, up to ambient isotopies), where the cumulative work of researchers \cite{burton2020next, conway1970enumeration, dowker1983classification} has culminated in the widely accessible census of $\sim 350'000'000$ prime knot diagrams, computational methods are expected to foster the elaboration of tables of 3-manifolds, particularly valuable assets in the proposition and verification conjectures, as well as in the discovery of counterexamples. 

Abstracting modern approaches \cite{burke_et_al:LIPIcs.SoCG.2025.28, burton2014cusped, burton2020next, burton2015edge} of the construction of censuses of topological spaces, one requires 1. compact data structures to represent massive amounts of (candidate) spaces and the iterative construction of all spaces up to a maximum complexity; 2. efficient operations to disqualify inputs (such as {composite} knots, links, non-prime 3-manifolds) and simplify the representation combinatorics (Reidemeister moves, Pachner moves, etc); and, 3. methods for discriminating spaces, such as invariants. 
For 3-manifolds, these steps are often addressed through \emph{triangulations}---homeomorphic copies described by a finite set of tetrahedra and gluing rules among their faces---of complexity usually measured by the number of tetrahedra. 
On the other hand, theoretical bounds on the minimal number of tetrahedra to represent some spaces \cite{datta2007minimal, jaco20030, jaco2006layered, matveev2007algorithmic} and the difficulty of generating particular structures, such as hyperbolicity \cite{matveev2007algorithmic}, can limit the effectiveness of triangulation-based censuses.

\emph{Heegaard splittings} emerge as an alternative to triangulations.
They depict a closed 3-manifold $M$ as the result of gluing two better-behaved pieces called \emph{handlebodies} along a common boundary. 
Splittings can also be combinatorially described by \emph{Heegaard diagrams}, two sets of curves in a surface whose intersection pattern conveys much of the topological information about $M$.
We suggest here a new data structure to represent Heegaard diagrams based on \emph{normal coordinates}, of a size as large as the \emph{binary} of the number of intersections between the two sets of curves. 
The gain in measuring the complexity of 3-manifolds through this notion is not to be overlooked: we present in Section \ref{sec: examples and experiments} families of manifolds whose diagrams' complexity grows linearly with the size of the input, but whose triangulations necessarily use exponentially many tetrahedra. 
However, what is more important is that due to the proliferation of efficient operations in normal coordinates \cite{Erickson2013, lackenby2024some, schaefer2002algorithms, schaefer2008computing, Stefankovic}, we can establish polynomial-time algorithms even in this compressed notion of complexity, leading to basic manipulations of Heegaard splittings and the computation of simple 3-manifold invariants such as the fundamental and homology groups. 
Moreover, as we point out in Remark \ref{rm: co-NP}, some important complexity results about 3-manifolds---such as sphere (co-)recognition---are still preserved if we assume compressed inputs, which tells us much about the inherent difficulty of these problems.
We hope that these techniques are useful for the study and classification of topological spaces that are beyond the reach of efficiency when represented by other methods.

The paper begins with a background section on both low-dimensional and computational topology (Section \ref{sec: background}), where we review some of the concepts and results that will be needed in the later parts. 
We refer the reader to \cite{schultens2014introduction} and \cite{singer2015lecture} for basic ideas of geometric topology, such as homeomorphisms, isotopies, gluing maps, and manifolds.
Our contribution starts in Section \ref{sec: data structure}, where we formally present our data structure for Heegaard splittings. 
Section \ref{sec: algorithms} establishes basic algorithms in this structure.
We compare early implementations\footnote{The code is publicly available as the \texttt{FHDPy} package at \url{https://github.com/HLovisiEnnes/FHDpy}. Details of the implementation are also outlined in Appendix \ref{app: details}.} of some of these algorithms with triangulation-dependent methods in Section \ref{sec: examples and experiments}.
Finally, Section \ref{sec: open questions} points to open questions towards the construction of censuses of 3-manifolds.
We also present appendices with extensions of the techniques described in the main text and details on the implementation.

Before we close this introduction, we draw the reader's attention to two subtleties of the current work. 
First, although most of our tools apply to \emph{generalized} Heegaard splittings defined for \emph{orientable compact} 3-manifolds, potentially with boundary \cite{johnson2006notes, schultens2014introduction}, we will focus solely on \emph{closed} 3-manifolds. 
Therefore, all 3-manifolds here will be assumed orientable and closed, unless explicitly stated otherwise. 
Second, we assume throughout a \textit{unit-cost RAM model} with bit-size $W$ \cite{hagerup1998sorting}. 
This means that the four integer operations and integer comparison can be executed in constant time, provided that the numbers involved have binary representations of size at most $W$.

\section*{Acknowledgement} This work has been partially supported by the ANR project ANR-20-CE48-0007 (AlgoKnot) and the project ANR-15-IDEX-0001 (UCA JEDI). It has also been supported by the French government, through the France 2030 investment plan managed by the Agence Nationale de la Recherche, as part of the ``UCA DS4H'' project, reference ANR-17-EURE-0004.

\section{Background material}\label{sec: background} 
\subsection{Background on low-dimensional topology}\label{sec: background topology}
By \textit{surface}, we understand a compact and connected smooth 2-manifold $\Sigma$ with or without boundary.
All surfaces will be assumed orientable and oriented. 
If $\Sigma$ is closed, it is uniquely given, up to homeomorphism, by an integer $g$ called the surface's genus \cite{farb2011primer} and we denote this surface by $\Sigma_g$. 
A \textit{(simple closed) curve} and an \emph{arc} are images of proper embeddings of $S^1$ and $[0,1]$ in $\Sigma$, respectively. 
A \textit{multicurve} is a finite collection of disjoint curves. 
Given a multicurve $\gamma$, we denote its number of connected components by $\#\gamma$ and, with some abuse of notation, write $c\in \gamma$ if $c$ is a component of $\gamma$. 
Whenever possible, we reserve Greek letters for multicurves and Latin letters for curves and arcs.

We will most often consider the isotopy class of a (multi)curve $\gamma$ in $\Sigma$, $[\gamma]$.
A class $[\gamma]$ is called \emph{essential} if no component is isotopic to a point.
From now on, we will not differentiate $\gamma$ from its isotopy class $[\gamma]$ in the notation if there is no confusion.

 \begin{figure}
    \centering
    \includegraphics[width=0.5\textwidth]{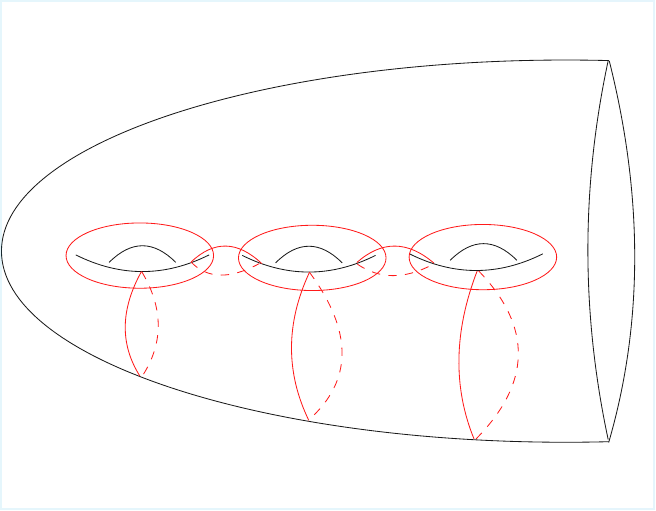}
     \caption{The recursive relation that defines the Lickorish generators of the mapping class group.}
  \label{fig: lockorish}
\end{figure}   

Let $\gamma$ and $\delta$ be two oriented (multi)curves in $\Sigma$, isotoped to intersect transversely.
For each intersection point, we let its \emph{index} be $+1$ if the orientation of the curves agrees with the orientation of $\Sigma$ as in the right-hand rule, otherwise we set it to $-1$.
The \textit{algebraic intersection number} of $\gamma$ and $\delta$, $\hat{i}(\gamma, \delta)$, is the sum of the indices of all their transverse intersections; the algebraic intersection number is well defined up to isotopy \cite{farb2011primer}.
Two (multi)curves are said to be in \textit{efficient position} if they are representative of their respective isotopy classes such that the total number of intersection points between them is minimal.
Equivalently, they are in efficient position if they do not have segments that cobound disks, known as \textit{bigons}, on the surface \cite{farb2011primer}. 

The \emph{mapping class group} of the surface $\Sigma_g$, $\Mod(\Sigma_g)$, is the group of orientation-preserving $\Sigma_g\to\Sigma_g$ homeomorphisms up to isotopy. 
For each $g$, $\Mod(\Sigma_g)$ is finitely generated as a group by an alphabet $\mathcal{L}$, consisting of \emph{Dehn twists} about $3g-1$ (isotopy classes of) canonically defined essential curves $s_1,\dots,s_{3g-1}$ \cite{lickorish1964finite} (Figure \ref{fig: lockorish}).
Formally, a Dehn twist $\tau_s:\Sigma_g\to\Sigma_g$ about $s$ is given by cutting out a cylindrical regular neighborhood around the curve, giving a $2\pi$ left twist to one of the cylinder's boundaries and gluing the cylinder back into the surface.
We call the curves $s_1,\dots,s_{3g-1}$ the surface's \textit{Lickorish generators}. 
Hopefully without confusion, we also call the Dehn twists in the alphabet $\mathcal{L}=\{\tau_{s_1},\dots,\tau_{s_{3g-1}}\}$ Lickorish generators.

An essential multicurve $\gamma$ in the surface $\Sigma_g$ is a \textit{system} if $\Sigma_g$ cut along $\gamma$, $\Sigma_g\backslash \gamma$, is homeomorphic to a $2g$ punctured sphere. 
A system $\gamma$ in $\Sigma_g$, for $g\geq 1$, can be used to construct the genus $g$ \textit{handlebody}
\begin{equation*}
    \mathcal{H}_\gamma = \Sigma_g\times [0,1]\cup_{\gamma \times \{0\}} 2\text{-handles}\cup 3 \text{-handles}
\end{equation*}
where we attach 2-handles along the curves $\gamma$ in $\Sigma\times \{0\}$ and fill any spherical boundary component with 3-handles. 
Note that $\partial\mathcal{H}_\gamma=\Sigma_g$ and that the handlebody $\mathcal{H}_\gamma$ is homeomorphic to the connected sum of $g$ solid tori.
The components of $\gamma$ are \textit{meridians} of $\mathcal{H}_\gamma$ in the sense that they bound disks in the handlebody's interior. 
Two systems $\gamma$ and $\gamma'$ in the same surface $\Sigma_g$ will be called \textit{equivalent} if $\gamma'$ is a set of meridians in $\mathcal{H}_\gamma$ (which implies that $\gamma$ is a set of meridians in $\mathcal{H}_{\gamma'}$).

\begin{remark}\label{rm: systems}
    In this paper, we restrict the analysis to \emph{minimal} systems $\gamma$, that is, we assume $\#\gamma=g$.
    In recent decades, it became common to extend the definition of a system to any multicurve in the surface that defines a handlebody \cite{hempel20013, yoshizawa2014high}.
    In fact, the data structure described in Section \ref{sec: data structure} can be generalized to this notion of system, and there are algorithms to translate from and to the special case of cardinality $g$.
    For completeness, we describe these extensions in the Appendix \ref{app: extensions}.
\end{remark}

Let $c_1,c_2$ be two fixed components of a system $\gamma$ and $p$ an arc from a puncture corresponding to $c_1$ to a puncture corresponding to $c_2$ in $\Sigma_g\backslash \gamma$. 
We define the \emph{band sum} of $c_1,c_2$ and $p$ as the curve $c$ in $\Sigma_g\backslash \gamma$ indicated by the surgery in Figure \ref{fig: band sum}. 
One can show that $c$ is essential and bounds a disk in $\mathcal{H}_\gamma$ \cite{ennes2025hardnesscomputationquantuminvariants, johnson2006notes}, implying that $\gamma'=\gamma-\{c_1\}\cup \{c\}$ is a system equivalent to $\gamma$. 
We call this whole operation a \emph{disk slide}.
With finitely many applications of disk slides and isotopies, a system $\beta$ can be transformed into any equivalent system $\beta'$ \cite{johnson2006notes}. 

\begin{figure}
    \centering
    \includegraphics[width=0.6\textwidth]{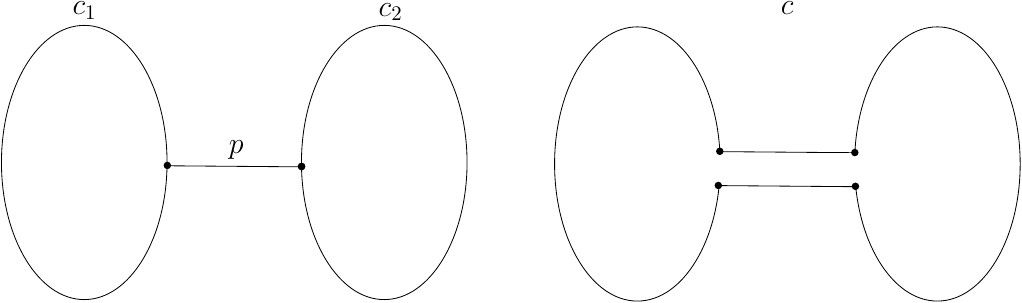}
     \caption{The result $c$ of a band sum of $c_1$, $c_2$, and $p$.}
  \label{fig: band sum}
\end{figure}

A genus $g$ \emph{Heegaard splitting} is the closed 3-manifold resulting from the gluing of two handlebodies with boundaries homeomorphic to $\Sigma_g$ through a gluing map $f:\Sigma_g\to\Sigma_g$.
In fact, one might assume that $f$ is given by a $\phi\in \Mod(\Sigma_g)$ composed with some canonical orientation-reversing homeomorphism. 
This means that a Heegaard splitting can be given by a pair $(\Sigma_g,\phi)$ where $\phi$ is a word on Lickorish generators.
We call this presentation of Heegaard splitting a \emph{Heegaard word}.
Every closed 3-manifold has a Heegaard splitting and can then be associated with some pair $(\Sigma_g,\phi)$ \cite{moise1952affine, saveliev2011lectures}.

Although naturally combinatorial, a Heegaard word $(\Sigma_g,\phi)$ hardly conveys topological information about the closed 3-manifold $M$ it represents.
Nonetheless, if we fix a system of meridians $\alpha$ on one of the two handlebodies, $\mathcal{H}_\alpha$, then $\beta=\phi(\alpha)$ is a system of meridians of the \emph{other} handlebody, $\mathcal{H}_\beta$. 
We call the tuple $(\Sigma_g, \alpha, \beta)$ a \textit{Heegaard diagram}.
We note that Heegaard diagrams are algorithmically independent of Heegaard words: any two systems $\alpha$ and $\beta$ will define the same splitting, provided that the curves are meridians in their respective handlebodies. 
Diagrams are therefore not unique for a given splitting. 
For example, if $\beta$ and $\beta'$ are \textit{isotopic diagrams} on $\Sigma_g$---that is, $\beta$ and $\beta'$ are isotopic multicurves in $\Sigma_g$---then $(\Sigma_g,\alpha,\beta)$ and $(\Sigma_g,\alpha,\beta')$ naturally define the same manifold.
Similarly, if $\beta$ is equivalent to $\beta'$, we say that the diagrams are \textit{equivalent} and note that $(\Sigma_g,\alpha,\beta)$ and $(\Sigma_g,\alpha,\beta')$ also define the same 3-manifold.

\emph{Stabilizations} provide yet another method to generate alternative Heegaard splittings of the same 3-manifold. 
Formally, let $(\Sigma_{g_1},\alpha_1,\beta_1)$ be a Heegaard diagram of a 3-manifold $M_1$ and $(\Sigma_{g_2},\alpha_2,\beta_2)$ be a diagram of $M_2$. 
The diagram $(\Sigma_{g_1+g_2},\alpha_1\cup\alpha_2,\beta_1\cup \beta_2)$ represents the connected sum $M_1\#M_2$, where $\Sigma_{g_1+g_2}$ is the connected sum of $\Sigma_{g_1}$ and $\Sigma_{g_2}$ taken along open disks disjoint from $\alpha_1$,$\alpha_2$, $\beta_1$, and $\beta_2$.
In particular, noting that $S^3$ has a splitting with diagram $(\Sigma_1=T^2,m,\ell)$, where $|m\cap \ell|=1$, and that $M\#S^3$ is homemorphic to $M$, the splitting $(\Sigma_{g+1},\beta\cup\{\ell\})$ represents the same manifold as $(\Sigma_g,\alpha,\beta)$.
We say that the former is a {stabilization} of the latter.
In the other direction, a Heegaard diagram $(\Sigma_g,\alpha,\beta)$ represents a stabilization of a lower genus splitting if and only if there is a meridian $a$ in $\mathcal{H}_\alpha$ and a meridian $b$ in $\mathcal{H}_\beta$ such that $|a\cap b|=1$ \cite{johnson2006notes}. 
In particular, we say that this pair of meridians is a \textit{stabilization pair} and that the diagram is \textit{stabilized}. 
For a Heegaard diagram $(\Sigma_g,\alpha,\beta)$, $a$ and $b$ form a \textit{trivial stabilization pair} if $a$ is a component $ \alpha$, $b$ is a component of $\beta$, they form a stabilization pair, and $b\cap a'=\emptyset$ and $b'\cap a = \emptyset$ for any other component $a'\in\alpha$ and $b'\in\beta$. 
Whenever $a$ and $b$ form a trivial stabilization, cutting the surface along $a$ induces a Heegaard diagram in the surface of genus one unit smaller.
This process will be called \textit{destabilization}.
 
A splitting $(\Sigma_g,\mathcal{H}_\alpha,\mathcal{H}_\beta)$ is called \textit{reducible} if there is a curve $c$ in $\Sigma_g$ that is a meridian of both $\mathcal{H}_\alpha$ and $\mathcal{H}_\beta$. 
We call a diagram $(\Sigma_g, \alpha,\beta)$ \textit{reducible} if either (1) there are a component $a\in \alpha$ and a component $b\in \beta$ that are isotopic in $\Sigma_g$; or (2) if there is an essential separating curve $s$, disjoint from both $\alpha$ and $\beta$, and such that $\Sigma_g\backslash s$ defines a connected sum of two Heegaard splittings. 
The uniqueness of fundamental group decomposition \cite{aschenbrenner20153} and \cite{schultens2014introduction}[Theorem 6.3.5] implies that a reducible diagram represents a reducible splitting.
A 3-manifold is \emph{reducible} if it contains a sphere that does not bound a ball. 
If $(\Sigma_g,\alpha,\beta)$ is stabilized, then either the splitting is reducible or a stabilization of a genus 1 splitting of $S^3$; moreover, a reducible splitting is stabilized or a splitting of a reducible manifold \cite{johnson2006notes}.

Useful algebraic information on the 3-manifold can also be readily obtained from its Heegaard diagrams.
Label the curves $a_1,\dots,a_g\in\alpha$ and $b_1,\dots,b_g\in \beta$ and give each of them an orientation. 
It follows from an application of the Van Kampen theorem \cite{hatcher2002algebraic, hempel20223, johnson2006notes} that we can compute a presentation $\langle x_1,\dots,x_g|r_1,\dots, r_g\rangle$ for the \emph{fundamental group} $\pi_1(M)$ by traversing the curve $b_i$ along the assumed orientation and, whenever it intersects some $a_j$, we append $x_j$ to the relation $r_i$ if the index of the intersection is positive, otherwise we append $x_j^{-1}$ to it (if $b_i$ does not intersect $\alpha$, $r_i$ is just the trivial relation). 

Because the \textit{first (integral) homology group of} $M$, $H_1(M)$, is the abelianization of $\pi_1(M)$ \cite{hatcher2002algebraic},
we can present $H_1(M)$ as $ \langle x_1,\dots,x_g| r'_1,\dots, r'_g, [x_i,x_j]\rangle$, where each $r'_i= x_1^{k_i^1}\dots x_g^{k_i^g}$ can be computed by commuting the generators $x_j$ in the relations $r_i$.
Note that $k_i^j=\hat{i}(b_i, a_j)$ and we store this information in a $g\times g$ \textit{presentation matrix} for $H_1(M)$, $K=[k_i^j]$, whose Smith normal form carries all the necessary information about this Abelian group. 
For example, the \textit{(first) Betti number} of $M$ is given by the number of zero-rows. 
A 3-manifold with Betti number equal to zero is called a \textit{rational homology sphere}, and it will be an \textit{integral homology sphere} if the Smith form of the presentation matrix is the identity.

\subsection{Background on computational topology}
Heegaard diagrams consist of three essential components that any candidate data structure must encode: a surface, $\Sigma_g$, and two multicurves, $\alpha$ and $\beta$.
Starting with the surface, a graph $T=(V,E)$ embedded in $\Sigma$ is a \textit{cellular embedding} if the manifold defined by cutting $\Sigma$ along $T$, $\Sigma\backslash T$, is homeomorphic to a collection of disks. 
We call each disk component of $\Sigma\backslash T$ a \textit{face} of $T$.
A cellular embedding $T$ in the surface $\Sigma$ is said to be \textit{oriented} if we assign to each face an ordered list of its vertices, consistent with the orientation of $\Sigma$ as in the right-hand rule. 
We assume every cellular embedding to be oriented.
A cellular embedding $T$ is a \textit{(generalized) triangulation} of $\Sigma$ if exactly three edges of $T$ bound (the closure of) each face. 
We define the size of a triangulation, $|T|$, as its number of faces; note that $|V|,|E|=O(|T|)$ where $|V|$ is the number of vertices of the triangulation and $|E|$ the number of edges. 
Surface triangulations should not be confused with the similar, but different concept of 3-manifold triangulations.

Every oriented surface can be described as a data structure by an oriented triangulation.
For the triangulations themselves, we will assume that they are given by the set of vertices, edges, and triangles, with bidirectional pointers between vertices and edges, and between edges and the triangles, which therefore requires a total space of $O(|T|)$.
In particular, determining whether two triangles are adjacent, getting the incident vertices and edges of a triangle, and finding triangles bounded by a specific edge take constant time in this model. 
Other, more compact data structures for triangulation, such as CGAL's \cite{devillers2000triangulations}, can be converted into our structure in time $O(|T|)$ and at similar costs to memory.
More general cellular embeddings are treated similarly.

A curve $c$ is \textit{edged} with respect to $T$ if it is disjoint from its faces (red curve in Figure \ref{fig: data structure}).
We represent an edged curve as a data structure by the list $E_T(c)=\{e\in E:|c\cap e|=e\}$, assumed to be ordered according to an implicit traversing orientation of $c$. 
Similarly, a multicurve $\gamma$ can be represented by a set of $\#\gamma$ edge lists, one for each component, which we denote as $E_T(\gamma)$. 
We assume that no two components can share an edge, nor can the same component be incident to a vertex more than once; nevertheless, two distinct components can share a vertex. 
The \textit{complexity} of $E_T(\gamma)$ is $\|E_T(\gamma)\|=\sum_{c\in \gamma}|E_T(c)|$.

\begin{remark}
    The results of this paper, including Lemma \ref{lm: get twins}, would still hold---sometimes with an extra parameter in the complexity---even if we allowed distinct components of edge-represented multicruves to share edges.
\end{remark}

On the other hand, an essential multicurve $\gamma$ is called \textit{standard} in $T$ if it intersects the graph only at the edges and transversely (blue curve in Figure \ref{fig: data structure}).
If $\gamma$ is standard, we can represent it as a collection of words in a finite alphabet: we arbitrarily orient the edges $E$ and take $E^{\pm} = E\cup E^{-1}$ for the letters.
Then for each component $c$ of $\gamma$, the word $I_T(c)$ in $(E^{\pm})^*$ is given by traversing $c$ along some arbitrary direction and, whenever we intersect an edge $e\in E$, we append $e^i$ to $I_T(c)$ where $i$ is the index of the intersection. 
An \textit{intersection sequence} representation for $\gamma$, $I_T(\gamma)$, is the set $\{I_T(c)\}_{c \in \gamma}$.
While $E_T(\gamma)$ is well-defined at the multicurve level, intersection sequences are defined only up to isotopies inside the faces of the cellular embedding.
Moreover, cyclic permutations and taking inverses of the words $I_T(c)$ in $I_T(\gamma)$ yield the same isotopy classes.

Instead of working with explicit intersection sequences, we will often assume \textit{straight line programs} (\emph{SLPs}) to express $I_T(c)$.
Formally, an SLP on the alphabet $E^{\pm}$ is a finite sequence of equations $\langle x_1 = \text{EXPR}_1, \dots, x_\ell = \text{EXPR}_\ell\rangle$, where the left-hand side, $x_i$, are symbols not in $E^{\pm}$ called \textit{variables}, and the expressions on the right-hand side, $\text{EXPR}_i\in(E^{\pm}\cup\{x_1,\dots,x_{i-1}\})^*$, are called \textit{assignments}.
Each assignment $\text{EXPR}_i$ is assumed to either be a symbol from $E^{\pm}$, in which case the assignment is called \textit{simple}, or of the form $\text{EXPR}_i=x_{j_1}^{\pm 1}\cdot x_{j_2}^{\pm 1}\cdot \dots \cdot x_{j_k}^{\pm 1}$, where $1\leq j_1,\dots,j_k < i$, in which case we say the assignment is \textit{proper}.
In particular, the SLP $\langle x_1,\dots,x_\ell\rangle$ represents a word $w$ in $(E^{\pm})^*$ if, recursively substituting the appropriate $x_{j}$ in each proper assignment, gives $x_\ell$ equal to $w$.
We call $\ell$ the \textit{length} of the SLP and define its \textit{complexity} $m=\sum_{i=1}^\ell |\text{EXPR}_i|$, where $|\text{EXPR}_i|$ is the number of variables and symbols in the assignment $|\text{EXPR}_i|$.
Given any word $e_1\dots e_n$ in $(E^{\pm})^*$, one can trivially build an SLP of complexity $2n$, but there are instances in which SLPs can be used to give exponential gains over uncompressed representation of words \cite{schaefer2008computing}. 
Nonetheless, even in these extremely compressed cases, several useful operations can still be efficiently performed with respect to the SLP's complexity, as stated in the following lemma. 
Here and throughout, we assume that $\lg k$ is the number of bits necessary to represent the potentially signed integer $k$; in particular, $\lg k = \log k + 2$ if $k$ is signed and $\log k +1$ if unsigned.

\begin{lemma}\label{lm: slp}
    Let $\langle x_1,\dots,x_\ell\rangle$ be an SLP of complexity $m$ representing a word $w\in (E^{\pm})^*$. In a unit-cost RAM model of bit-size $m + \lg k$, for a fixed $k\in \Z^+$, one can compute
    \begin{enumerate}[label=(\alph*)]
        \item an SLP of $w^{-1}$ in time $O(1)$;
        \item an SLP of $w^k$ in time $O(\lg k)$;
        \item the number of occurrences of a symbol $e\in E^{\pm}$ in $w$, $e(w)$, in time $O(m)$.
    \end{enumerate}
\end{lemma}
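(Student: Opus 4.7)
The plan is to handle each part by direct manipulation of the SLP, exploiting the fact that proper assignments permit variable occurrences with $\pm 1$ exponents.

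For (a), I would simply append a single proper assignment $x_{\ell+1} = x_\ell^{-1}$. Under the SLP semantics, the occurrence $x_\ell^{-1}$ substitutes to the formal inverse of the word represented by $x_\ell$, so the extended program represents $w^{-1}$, and it is built in $O(1)$ time. For (b), I would use classical repeated squaring: writing the binary expansion $k = \sum_{i=0}^{t} b_i 2^i$ with $t = \lfloor \log_2 k \rfloor = O(\lg k)$, introduce new variables $y_0 = x_\ell$ and $y_{i+1} = y_i \cdot y_i$ for $i = 0, \dots, t-1$, and append one final proper assignment concatenating exactly those $y_i$ with $b_i = 1$. The result represents $w^k$. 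The construction adds $O(\lg k)$ new variables of total complexity $O(\lg k)$, and iterating through the bits of $k$ costs $O(\lg k)$ RAM operations since $k$ fits in a single word of the assumed bit-size.

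For (c), I would run a dynamic program over $x_1, \dots, x_\ell$ in order, but with the essential twist of tracking \emph{two} counters per variable, namely $n_i = e(x_i)$ and $n_i^- = e^{-1}(x_i)$. The reason is that if a proper assignment contains $x_j^{-1}$, the word of $x_j^{-1}$ is the formal inverse of the word of $x_j$, so occurrences of $e$ in $x_j^{-1}$ correspond bijectively to occurrences of $e^{-1}$ in $x_j$. For a simple assignment $x_i = f$, set $n_i = 1$ if $f = e$ and $0$ otherwise, and $n_i^-$ symmetrically; for a proper assignment $x_i = x_{j_1}^{s_1} \cdots x_{j_p}^{s_p}$, set $n_i = \sum_{q: s_q = +1} n_{j_q} + \sum_{q: s_q = -1} n_{j_q}^-$ and symmetrically for $n_i^-$. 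The work at variable $i$ is $O(|\text{EXPR}_i|)$, summing telescopically to $O(m)$. Since $|w| \leq 2^\ell \leq 2^m$, each counter fits in one RAM word of bit-size $m + \lg k$, so all additions are constant time.

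The main subtlety, and the only step that is not essentially mechanical, is the coupling in (c): a single-counter dynamic program tracking only $e(x_i)$ is incorrect whenever some variable appears with a negative exponent, so both counts must be carried in parallel and updated jointly. Parts (a) and (b) are nearly immediate once one exploits the liberal $\pm 1$ exponent syntax of SLPs.
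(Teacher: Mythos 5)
Your proposal is correct and follows essentially the same route as the paper: appending $x_\ell^{-1}$ for (a), repeated squaring over the binary expansion of $k$ for (b), and a single pass of dynamic programming over the assignments for (c). Your treatment of (c) is in fact slightly more careful than the paper's one-line sketch, since you explicitly track both $e(x_i)$ and $e^{-1}(x_i)$ to handle variables occurring with exponent $-1$ — a subtlety the paper leaves implicit.
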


\begin{proof}
    For (a), note that $\langle x_1,\dots, x_\ell,x_\ell^{-1}\rangle$ is an SLP for $w^{-1}$. 
    For (b), assume $k\geq 0$ (the case for $k<0$ follows from this and (a)) and let $b_0b_1\dots b_{\lg k}$ be the binary expansion of $k$, then 
    \begin{equation*}
        \langle x_1,\dots, x_\ell, x_{\ell+1}=x_\ell\cdot x_\ell, \dots, x_{\ell+\lg k}=x_{\ell+\lg k-1} \cdot x_{\ell+\lg k-1}, b_0 x_\ell \cdot b_{1}  x_{\ell+1}\cdot \dots \cdot x_{\ell+\lg k}\rangle
    \end{equation*}
    is an SLP for $w^k$. 
    Finally, (c) follows by dynamic programming through iterating over $1\leq i \leq \ell$ and counting the references to the simple assignment $x_j=e$ or to proper assignments that recursively refer to assignments that refer to $x_j$. 
\end{proof}

We will assume that every intersection sequence $I_T(c)$ is given by an SLP of complexity $\|I_T(c)\|$ and let $\|I_T(\gamma)\|=\sum_{c\in \gamma}\|I_T(c)\|$.

\begin{figure}
\begin{subfigure}{.48\textwidth}
  \centering
  \includegraphics[width=.6\linewidth]{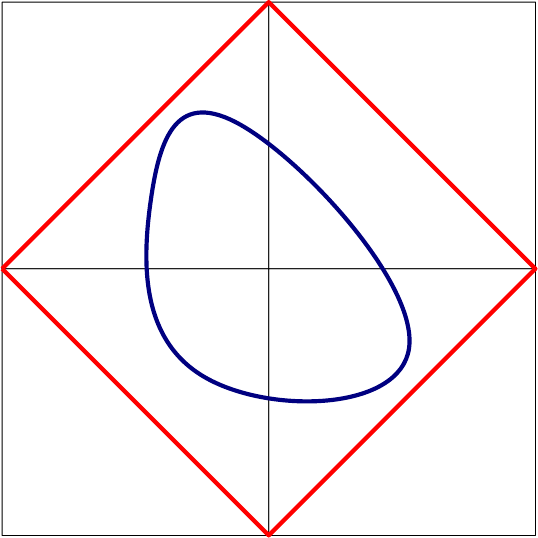}
  \caption{A triangulation with an edged (red) and a standard (blue) curve.}
         \label{fig: data structure}
\end{subfigure}\hfill
\begin{subfigure}{.48\textwidth}
  \centering
  \includegraphics[width=.6\linewidth]{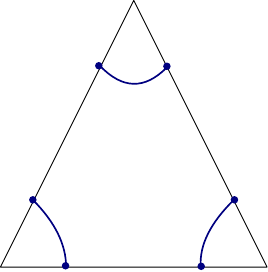}
         \caption{The three distinct isotopy classes of elementary arcs (blue) on a triangular face.}
         \label{fig: elementary arcs}
     \end{subfigure}
        \caption{}
\end{figure}

Now, suppose that $T=(V,E)$ is a triangulation. 
A standard (multi)curve $\gamma$ is \textit{normal} if it intersects each face of $T$ in arcs, called \textit{elementary arcs}, that connect distinct sides of the triangle in which they are contained (Figure \ref{fig: elementary arcs}). 
Equivalently, $\gamma$ is normal if each (uncompressed) intersection sequence in $I_T(\gamma)$ is \textit{cyclic reduced}, that is, if there are no substrings of type $ee^{-1}$ or $e^{-1}e$ in any of the cyclic permutations of the components' sequences. 
When the curve $c$ is normal for a triangulation, its isotopy class is fixed by the number of times it intersects each labeled edge \cite{schaefer2002algorithms}, which implies that we can describe it through a vector in $N_T(c)$ in $\N^{|E|}$. 
We call $N_T(c)$ the \textit{normal coordinates} of the curves, whose complexity, $\|N_T(c)\|$, is the number of bits necessary to store this vector, i.e., $\sum_{e\in E}\lg e(c)=\sum_{e\in E}(\log e(c)+1)$.
Naturally, we define $N_T(\gamma)$ as the set $\{N_T(c)\}_{c\in \gamma}$.

\begin{remark}
    We could equally represent a normal multicurve $\gamma$ by a \emph{single} vector $\sum_{c\in\gamma} N_T(\gamma)$. 
    Similarly, it is possible (and common) to extend the notion of SLPs to encode more than one uncompressed word, which, in our case, implies giving all components of $\gamma$ as a single SLP. 
    Although these approaches are more parsimonious, they are less useful for our needs.
    Moreover, Lemma 5.1 of \cite{Erickson2013} gives an algorithm to transform, in time $O(|T|\|N_T(\gamma)\|)$, the normal coordinates of a multicurve given by a sum of its components' coordinates to our expected form.
\end{remark}

Any normal (multi)curve $\gamma$ partitions the edges of $T$ into segments called \emph{ports}. 
The \emph{overlay graph} of $\gamma$ is defined by having as vertices the union of $V$ and the intersections $E\cap \gamma$, and as edges the union of the ports and the elementary arcs of $\gamma$ (see Figure \ref{fig: street complex}). 
The overlay graph splits each triangle $t$ of the original triangulation into two types of regions: \emph{junctions}, which are adjacent to three edges of $T$, and \emph{blocks}, which are adjacent to only two. 
A port of the overlay graph is called \emph{redundant} if it separates blocks from different triangles of the original triangulation. 
Following Erickson and Nayyeri, we define the \emph{street complex} of $\gamma$, $S(T,\gamma)$, by deleting redundant ports and their endpoints from the overlay graph.
In practice, the street complex merges blocks of the overlay graph into longer faces called \emph{streets}; naturally, $S(T,\gamma)$ is a cellular embedding in the surface triangulated by $T$, see Figure \ref{fig: street complex}.
We define the \emph{crossing sequence} of a street in $S(T,\gamma)$ as the intersection sequence of a curve that crosses the street from one end to the other with respect to the original triangulation $T$. 
For simplicity, we assume that any port that separates two junctions is a degenerate street with a crossing sequence of length 1. 
The street complexes of all multicurves in this paper will have complexity $O(|T|)$ \cite{Erickson2013}. 

\begin{figure}
    \centering
    \includegraphics[width=0.8\textwidth]{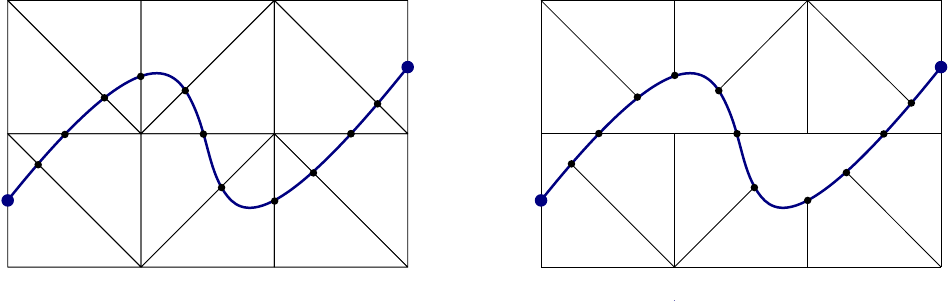}
     \caption{Left: the overlay graph of a normal arc (blue) with respect to a triangulation (black). Right: its street complex.}
  \label{fig: street complex}
\end{figure}

\section{Data structure for Heegaard diagrams}\label{sec: data structure}
Normal coordinates suggest a very natural encoding of Heegaard diagrams and, consequently, of closed 3-manifolds.
Explicitly, we propose representing $(\Sigma_g, \alpha, \beta)$ by a tuple $(T,E_T(\alpha), N_T(\beta))$, where $T=(V,E)$ is a triangulation, $E_T(\alpha)$ is an edge representation of $\alpha$, and $N_T(\beta)$ is a normal coordinate representation of $\beta$. 
We will often refer to the pair $(T,E_T(\alpha))$ as a \textit{marked triangulation} of $\Sigma_g$, in which context $N_T(\beta)$ will be called the diagram itself.
The \textit{complexity} of the diagram is
\begin{equation}\label{eq: complexity}
    \|(T,E_T(\alpha),N_T(\beta))\|=|T|+\|E_T(\alpha)\|+\|N_T(\beta)\|.
\end{equation}
We note that $\|E_T(\alpha)\| = O(|T|)=O(\|N_T(\beta)\|/|\beta\cap E|)$, that is, for complexity purposes, only the last term of \eqref{eq: complexity} matters.
However, for the sake of clarity, we will still refer to the terms of equation \eqref{eq: complexity} arising from the marked triangulation in our analyses.
All complexities will be reported under the assumption that $|\beta\cap T|=\Omega(|T|^2)$, as otherwise there are some faster trivial algorithms.

Our choices of data structures for $\alpha$ and $\beta$ allow us to represent curves with exponentially many intersection points with respect to the complexity of the diagram.
Referring to the discussion at the end of Section \ref{sec: background topology}, more intersections may lead to more complicated fundamental groups and, consequently \cite{matveev2007algorithmic}[Proposition 2.6.6], more complicated 3-manifolds.
In this sense, our structure is significantly more {compressed} than representations of Heegaard diagrams with \emph{both} $\alpha$ and $\beta$ edged, which can have, at most, $E_T(\alpha)\times E_T(\beta)=O(|T|^2)$ intersections. 

Complexity gains are also noticeable if we compare our structure with triangulations of 3-manifolds. 
The usual algorithm to triangulate Heegaard diagrams \cite{he2023algorithm} takes polynomial-time in $|\alpha\cap \beta|$. 
Moreover, as we empirically verify in Section \ref{sec: examples and experiments}, converting a Heegaard word to a diagram can be exponentially faster than converting the word to a triangulation for some choices of input and at least as fast for the general case. 
On the downside, it is unclear how to obtain and manipulate geometric information (such as \emph{hyperbolicity}) or to compute certain quantum invariants directly from Heegaard diagrams, making triangulations still preferable for these kinds of problems, but more on this is discussed in Section \ref{sec: open questions}.

As stated in the introduction, our goal is to establish basic efficient algorithms even in this very compressed setting. For such a purpose, we will make extensive use of the next two results. 
The first of them is (almost) fully established by Erickson and Nayyeri \cite{erickson2002optimally}.

\begin{theorem}\label{th: erickson}
        Suppose that $(T,E_T(\alpha),N_T(\beta))$ is a genus $g$ Heegaard diagram of complexity $m$. In a unit-cost RAM model of bit-size $\log |\beta\cap E|=O(m/|T|)$, one may compute 
        \begin{enumerate}[label=(\alph*)]
            \item the street complex $S(T,\beta)$ in time $O(m|T|)$;
            \item SLPs of the crossing sequence of all streets of $S(T,\gamma)$ in time $O(m|T|)$ of \emph{total} complexity $O(m|T|)$;
             \item the SLPs $I_T(\beta)$ of complexity $O(m)$, in time $O(m)$.
        \end{enumerate}
\end{theorem}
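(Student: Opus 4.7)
The plan is to invoke the algorithms of Erickson and Nayyeri~\cite{erickson2002optimally} for parts (a) and (b), and then to build part (c) by tracing each component of $\beta$ through the street complex that those algorithms produce.

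For (a), the street complex is constructed by processing each triangle of $T$ in turn. The corner coordinates inside each triangle are recovered from $N_T(\beta)$ by a constant number of additions and comparisons, each of unit cost under the stated bit-size assumption, so the junctions and blocks are identified in time $O(|T|)$. We then scan the edges of $T$ and merge adjacent blocks into streets, deleting redundant ports. The amortized analysis of~\cite{erickson2002optimally} bounds the total cost of the merges by $O(m|T|)$, and the resulting cellular graph has $O(|T|)$ faces as recalled in Section~\ref{sec: background}.

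For (b), the crossing-sequence SLPs are produced hand-in-hand with the merges of (a). Whenever two sub-streets are glued across an edge $e$ of $T$, we append one SLP rule of constant size that concatenates their crossing sequences with the letter $e^{\pm 1}$ in between. Summing the $O(1)$ rule length across the $O(m|T|)$ merges gives the claimed total SLP complexity $O(m|T|)$ in running time $O(m|T|)$.

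For (c), we assemble one SLP per component of $\beta$ out of the data of (a) and (b). A connected component $c \in \beta$ induces a closed walk $w_c$ in $S(T,\beta)$: between two consecutive visits to junctions, $c$ is a single arc lying in one street $s$, and its contribution to the uncompressed intersection sequence $I_T(c)$ is one copy (or inverse) of the crossing sequence of $s$. Using the corner coordinates at each junction, we track which of the parallel arcs in each street belongs to $c$, producing $w_c$ in a local pass. To keep the SLP compact, maximal sub-walks of the form $(s_1 \cdots s_r)^k$ are collapsed via Lemma~\ref{lm: slp}(b) into $O(\lg k)$ additional rules. Each such $\lg k$ fits within the bit budget of a corner coordinate of $c$, so summing across all streets and all components yields total SLP complexity and construction time $O(m)$.

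The main obstacle lies in (c): the SLPs for intersection sequences must match the bit-size $O(m)$ of the normal coordinates themselves, rather than the looser $O(m|T|)$ bound available for the street complex data. We must therefore avoid copying the crossing sequence of a street verbatim for each of its (possibly exponentially many) parallel arcs belonging to $c$, and instead encode the periodic structure of $w_c$ via Lemma~\ref{lm: slp}(b). The key combinatorial input is that $w_c$ decomposes into only $O(|T|)$ maximal periodic blocks whose exponents coincide with corner coordinates of $c$; this is what ultimately ties the SLP cost to the bit-length of $N_T(\beta)$.
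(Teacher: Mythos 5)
Your treatment of (a) and (b) is essentially the paper's: both parts are quoted from Erickson and Nayyeri (Theorems 4.7 and 4.8 of \cite{Erickson2013}), and your sketch of how the merges produce the crossing-sequence SLPs of total complexity $O(m|T|)$ is consistent with that.

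Part (c) is where there is a genuine gap. You build $I_T(c)$ by concatenating (possibly collapsed powers of) the crossing sequences of the streets that $c$ traverses. But the crossing sequences are themselves SLPs whose \emph{total} complexity is only bounded by $O(m|T|)$ --- that is precisely your part (b) --- and a single component can pass through every street, so its SLP must contain, as sub-SLPs, the crossing sequences of all of them at least once each. Collapsing the $k$ parallel arcs of $c$ inside one street into $O(\lg k)$ rules via Lemma~\ref{lm: slp}(b) removes the multiplicity, but it does not remove the $O(m|T|)$ base cost of encoding the crossing sequences themselves, nor the $O(m|T|)$ time needed just to read them. So your construction yields $\|I_T(\beta)\| = O(m|T|)$ and time $O(m|T|)$, not the claimed $O(m)$. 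This is exactly the ``conceptually simpler algorithm'' that the paper's remark following the theorem warns about: using only the methods of \cite{Erickson2013} one gets SLPs of complexity $O(m|T|)$. To reach $O(m)$ the paper bypasses the street complex entirely for part (c) and invokes Lemma~3.4.2 of \cite{Stefankovic}, which produces the intersection-sequence SLPs directly from the normal coordinates with complexity linear in $\|N_T(\beta)\|$. Separately, your claim that the street walk $w_c$ decomposes into only $O(|T|)$ maximal periodic blocks with exponents given by corner coordinates is asserted rather than proved, and it is the kind of combinatorial statement that would need an argument; but even granting it, the complexity accounting above still blocks the $O(m)$ bound by this route.
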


 \begin{proof}
     Part (a) corresponds to Theorems 4.7 of \cite{Erickson2013}. 
     Part (b) follows from Theorem 4.8 of the same paper, where we simply add the variable corresponding to the desired street at the end of the SLP compressing the crossing sequences of all components of $\beta$. 
     For Part (c), use Lemma 3.4.2 of~\cite{Stefankovic}.
    \end{proof}
\begin{remark}
    A conceptually simpler algorithm for part (c) of Theorem \ref{th: erickson} using only the methods of \cite{Erickson2013} is possible, but it outputs SLPs of complexity $O(m|T|)$.
\end{remark}

For the next theorem, we might assume that a Heegaard word $\phi$ is given in \textit{power-notation form}, that is, $\phi=\tau^{k_n}_{s_n}\circ \dots \circ \tau^{k_1}_{s_1}$, $s_i\neq s_{i+1}$ for $1\leq i< n$, and with each $k_i\in \Z$ in signed binary. 
A \textit{fully uncompressed} Heegaard word $\phi'=\tau^{\pm 1}_{s'_{|\phi'|}}\circ\dots \tau^{\pm 1}_{s'_1}$, with some $s'_i$ potentially equal to $s'_{i+1}$, can be put in power-notation in time $O(|\phi'|)$ in a unit-cost RAM model of bit-size $\lg |\phi'|$. 
For convenience, we shall also assume that the Lickorish generators are given as edge curves with respect to $T$.
Nevertheless, even if we take them given by normal coordinates or intersection sequences, the theorem still holds with similar complexity bounds by multiple applications of Theorem \ref{th: do a dehn twist}. 
Before we move to the main result, we need the following (technical) lemma.
\begin{lemma}\label{lm: get twins}
    Let $E_T(\gamma)$ be an edge list representation of an essential multicurve $\gamma$ in the surface $\Sigma_g$, with triangulation $T$. 
    One can compute, in time $O(\#\gamma \times |T|)$, a set of SLPs of complexity at most $O(\#\gamma \times |T|)$ of the intersection sequences of a normal multicurve $\gamma'$, isotopic to $\gamma$ in $\Sigma_g$.
\end{lemma}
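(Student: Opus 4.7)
The plan is to displace each component of $\gamma$ slightly off the 1-skeleton of $T$ so that it becomes a standard curve crossing edges transversely, then clean up the resulting intersection sequences. The main subtlety is that when the traversal of an edge list passes from one edge $e_i$ to the next edge $e_{i+1}$ through a common vertex $v$, the displaced curve must wind around $v$ inside the star of $v$, producing a (potentially long) sequence of elementary arcs.

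First, I would replace $T$ by its barycentric subdivision $T'$, which has size $O(|T|)$ and has the property that no two edges or vertices of a single triangle are identified; this lets us refer unambiguously to the "shared vertex" of two consecutive edges in the list. The edge list $E_T(c)$ of a component $c$ translates to an edge list $E_{T'}(c)$ in time $O(|T|)$. Next, for each component $c$ I walk the list $E_{T'}(c)=(e_1,\ldots,e_n)$. For each consecutive pair $(e_i,e_{i+1})$, let $v_i$ be the common vertex and let $t_i$ be the triangle on the side of $e_i$ dictated by the orientation; by walking around the star of $v_i$ inside the 1-skeleton of $T$ one triangle at a time until we reach $e_{i+1}$, I build an uncompressed sequence $w_i$ of elementary arcs that all miss the vertex $v_i$ on a fixed side. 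Concatenating these pieces gives a word $w_c$ representing a standard curve isotopic to $c$. Since $c$ is simple and components do not share edges, no edge is visited twice, so the total length of the walks around vertices for a single component is bounded by the sum of degrees in $T$, i.e. $O(|T|)$.

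Doing this independently for each of the $\#\gamma$ components yields words $w_{c_1},\ldots,w_{c_{\#\gamma}}$ whose total length is $O(\#\gamma\cdot|T|)$. The new curves are pairwise disjoint in the interiors of the triangles, but can still create bigons at common vertices of the original edge lists: two components that pass through the same vertex $v$ on opposite sides of some edges will cross twice in the star of $v$, bounding a bigon. I would detect each such bigon locally by inspecting, at every vertex $v$ of $T'$, the cyclic order in which the traces of the components enter and leave $\mathrm{star}(v)$, and then remove the bigon by swapping the corresponding sub-words of the two sequences (the surgery sketched in the introduction's appendix figure). Each swap strictly decreases the total number of crossings and costs $O(|T|)$; summing over the $O(\#\gamma^2)$ pairs of components and amortizing against the word length gives total cost $O(\#\gamma\cdot|T|)$.

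Finally, I cyclically reduce each resulting word to eliminate any subpatterns $ee^{-1}$ arising from pairs of adjacent elementary arcs that share an edge of $T'$ but not of $T$, then erase letters corresponding to edges in $T'\setminus T$ so that the sequences are intersection sequences with respect to the \emph{original} triangulation $T$. Each $w_{c_i}$ becomes a trivial SLP of length $O(|T|)$, giving a total SLP complexity of $O(\#\gamma\cdot|T|)$ in time $O(\#\gamma\cdot|T|)$ as claimed. The main obstacle I anticipate is a clean bookkeeping of the bigon-removal step: one has to be careful that removing a bigon between $c_i$ and $c_j$ does not create a new bigon with a third component $c_k$, which is why it is important to perform the swaps vertex by vertex (rather than crossing by crossing) so that the resulting words are simultaneously consistent at each vertex of $T'$.
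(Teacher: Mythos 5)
Your proposal is correct and follows essentially the same route as the paper's proof: pass to the barycentric subdivision, push each edged component off the 1-skeleton by walking around the shared vertex of consecutive edges in its star, bound the total walk length by $O(|T|)$ per component using simplicity, resolve inter-component bigons at shared vertices by swapping sub-words, and finish with cyclic reduction and trivial SLPs. The only cosmetic difference is the order of the clean-up steps (you erase the $T'\setminus T$ letters after bigon removal, the paper does it per component before), which does not affect correctness.
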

\begin{proof}
Suppose $\gamma$ is a simple curve. 
We can obtain, in time $O(|T|)$, the first barycentric subdivision $T'$ of the triangulation $T$ \cite{hatcher2002algebraic}, which we note to have size $|T'|=6|T|$ (see the blue edges and vertices of Figure \ref{fig: twins}).
We resort to $T'$ to ensure that no two vertices or edges of the same triangle are identified. 
One can easily update the representation $E_T(\gamma)$ to $E_{T'}(\gamma)$ in time $O(|T|)$.
Because $T'$ is assumed to be oriented, for each edge $e_{i}\in E_{T'}(\gamma)$, we can get, in time $O(1)$ on our data structure, the triangle $t_i$ to the left of $e_\gamma$ according to the traversing direction of $\gamma$ as defined by $E_{T'}(\gamma)$.
With the information of $t_i$, we can query the other two edges of $t_i$ distinct from $e_{i}$, $f_{i,1},f_{i,2}$, and the indices of the intersections $k_{i,1},k_{i,2}=\pm 1$ of a directed arc in $t_i$ parallel to $e_{i}$.
We also denote by $v_i$ the unique vertex of $t_i$ incident to $e_i$ and $e_{i+1}$ (uniqueness is guaranteed by the barycentric subdivision).
We will iterate over the tuples $(f^{k_{i,1}}_{i,1},f^{k_{i,2}}_{i,2}, v_i, t_i)$, $1\leq i \leq \|E_T(\gamma)\|$, to construct an uncompressed intersection sequence, $w_\gamma$, of a standard curve $\gamma'$ (purple in Figure \ref{fig: twins}) isotopic to $\gamma$.

\begin{figure}[htbp]
    \centering

    \begin{subfigure}[b]{\textwidth}
        \centering
        \includegraphics[width=0.6\textwidth]{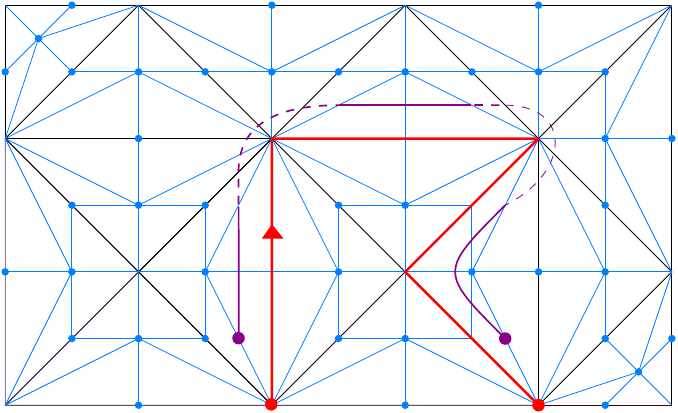}
        \caption{The triangulation $T$ (black) and the new edges and vertices introduced by its barycentric subdivision (blue). In red, we show a section of the edged curve $\gamma$ and the orientation induced by $E_T(\gamma)$. In purple, we highlight the standard curve $\gamma'$ isotopic to $\gamma$, where the solid intervals are those obtained by displacing the edges of $E_T(\gamma)$ and the dashed are those of form $(f')^j_{i,1},(f')^j_{i,2}$.}
        \label{fig: twins}
    \end{subfigure}

    \vspace{1em} 

    \begin{subfigure}[b]{\textwidth}
        \centering
        \includegraphics[width=0.6\textwidth]{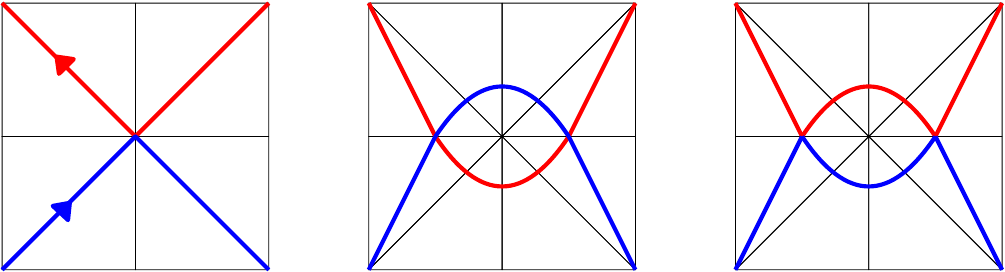}
        \caption{Left: The possible intersection of the components $c_1$ and $c_2$. Middle: the tracing induced by $w_1$ and $w_2$ before the removal of bigons. Right: the surgery to remove bigons between $c_1$ and $c_2$.}
        \label{fig: bigon}
    \end{subfigure}

    \caption{}
\end{figure}

Attach $f^{k_{i,1}}_{i,1}$ to $w_{\gamma}$. Call $(t')_i^1$ the triangle adjacent to $t_i$ by the edge $f_{i,2}$; note that $(t')_i^1$ is also incident to $v_i$.
Again, because of the barycentric subdivision, there is exactly one edge $(f')_{i,2}^1$ distinct from $(f')_{i,1}^1=f_{i,2}$ incident to $v_i$.
The pair $((f')^1_{i,1}, (f')^1_{i,2})$ defines an elementary arc in $(t')_i^1$ given by an intersection sequence $((f')^1_{i,1}, (f')^1_{i,2})$ (we here omit the intersection indices to simplify the notation), see the dashed purple lines in Figure \ref{fig: twins}. 
Similarly and recursively, the triangle $(t')^2_i$ that shares $(f')^1_{i,2}$ with $(t')^1_i$ has exactly one other edge incident to $v_i$, call it $(f')^2_{i,2}$. 
Because $f^{k_{i,1}}_{i,1}$ is incident to $v_i$, we can repeat the process until $(f')^n_{i+1,2}=f^{k_{i,1}}_{i,1}$ for some $n\leq 3|T'|$.
We then append $(f')^1_{i,1}, (f')^2_{i,1},\dots, (f')^n_{i,1}, f_{i+1,1}$ to $w_\gamma$. 
Since the curve $\gamma$ is assumed simple, no edge or vertex occurs twice in the list $E_T(\gamma)$, implying that this process takes time $O(|T'|)=O(|T|)$. 
We then remove edges in $T'-T$ from $w_\gamma$ and cyclically reduce it to ensure that it represents a normal curve.
Finally, $w_\gamma$ can be transformed into a trivial SLP, $I_T(\gamma)$, in time $O(|w_\gamma|)=O(|T|)$. 

Now, assume that $\#\gamma\geq 2$ and suppose we separately apply the above procedure to each component $c_1,\dots,c_{\#\gamma}$ of $\gamma$, obtaining uncompressed sequences $w_1,\dots,w_{\#\gamma}$.
Although these sequences individually represent curves that are isotopic to the components of $\gamma$, tracing them might create intersections between two components, say $c_1$ and $c_2$, if $E_T(c_1)$ and $E_T(c_2)$ shared a vertex, as shown in Figure \ref{fig: bigon}.
One can detect these cases in time $O(\#\gamma\times \|E_T(\gamma)\|)$ and solve the intersections by exchanging the parts of the sequence $w_1$ with the parts of the sequence $w_2$ that cobound bigons, ultimately performing the surgery in the bottom of Figure \ref{fig: bigon}.
We once more cyclically reduce each curve $w_1,\dots, w_{\#\gamma}$ (note that the cyclic reductions cannot form new bigons between components) and compute the trivial SLPs.
\end{proof}

\begin{theorem}\label{th: word to diagram}
    Suppose $\phi=\tau^{k_n}_{s_n}\circ \dots \circ \tau^{k_1}_{s_1}$, $s_i\neq s_{i+1}$ for $1\leq i< n$, is a Heegaard word in $\Mod(\Sigma_g)$ and let $T$ be a triangulation of $\Sigma_g$ with $\alpha$ and Lickorish generators edged. 
    There is an algorithm to compute intersection sequences $I_{T}(\phi(\alpha))$ in time $\Theta(g|T|+n|T|\lg k)$ in a unit-cost RAM model of bit-size $\lg k$ where $\lg k = \sum_{i=1}^n \lg k_i$. 
    Moreover, the Heegaard diagram $(T, E_{T}(\alpha),N_{T}(\phi(\alpha)))$ of complexity $O(|T|^3\lg^2 k)$ can be computed in time $O((|T|\lg k)^{10})$.
\end{theorem}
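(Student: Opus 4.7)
The plan is to combine three ingredients already introduced: the conversion of an edged multicurve to SLP-compressed intersection sequences (Lemma \ref{lm: get twins}), efficient SLP manipulations (Lemma \ref{lm: slp}), and an iterated application of a Dehn twist subroutine (Theorem \ref{th: do a dehn twist}). First I would apply Lemma \ref{lm: get twins} to the edged multicurve $\alpha$ and to each Lickorish generator $s_i$, producing SLPs $I_T(\alpha)$ of total complexity $O(g|T|)$ and $I_T(s_i)$ of complexity $O(|T|)$, all in time $O(g|T|)$. These SLPs are what we carry through the rest of the algorithm.

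Next, I would iterate through $\tau_{s_n}^{k_n}\circ\dots\circ\tau_{s_1}^{k_1}$, applying each twist to the SLPs of the $g$ tracked curves. The underlying identity is that $I_T(\tau_{s_i}^{k_i}(c))$ is obtained from $I_T(c)$ by replacing each occurrence of the symbol $s_i$ with $s_i\cdot(I_T(s_i))^{k_i}$ and, dually, each $s_i^{-1}$ with $(I_T(s_i))^{-k_i}\cdot s_i^{-1}$. Using Lemma \ref{lm: slp}(b), I would first build an SLP for $(I_T(s_i))^{k_i}$ of complexity $O(|T|\lg k_i)$ in time $O(|T|\lg k_i)$; then, by the standard SLP rewriting trick of factoring every literal $s_i$ through a dedicated variable and re-binding that variable, the substitution in each of the $g$ curves takes $O(1)$ per twist after a one-time $O(1)$ setup per curve. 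The $i$-th twist therefore costs $O(|T|\lg k_i+g)$, and summing with the preprocessing yields $O(g|T|)+\sum_{i=1}^n O(|T|\lg k_i+g)=O(g|T|+|T|\lg k+ng)$, which is absorbed into the stated bound $\Theta(g|T|+n|T|\lg k)$. The matching $\Omega$ lower bound is essentially forced by the size of the input and the need for any SLP encoding of the output to reflect each $\lg k_i$.

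For the second statement, I would feed each resulting SLP $I_T(\phi(c))$, $c\in\alpha$, into a normal-coordinate extraction routine. Concretely, each sequence is first cyclically reduced via a polynomial-time SLP-compressed-word procedure, so that it represents a curve in normal position; then for each of the $O(|T|)$ signed edges, Lemma \ref{lm: slp}(c) counts the occurrences in time linear in the SLP complexity. Each coordinate is bounded by the uncompressed length of the curve, which is at most exponential in $O(|T|\lg k)$, and hence takes $O(|T|\lg k)$ bits; multiplying by the $g=O(|T|)$ components and the $|T|$ edges recovers the stated $O(|T|^3\lg^2 k)$ complexity bound, while the $O((|T|\lg k)^{10})$ running time leaves room for the polynomial but high-degree overhead of SLP cyclic reduction and related compressed-word routines. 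The main obstacle is precisely this last step: avoiding any decompression while guaranteeing normal position requires nontrivial compressed-word machinery, and most of the analytic care goes into bounding the degree of the resulting polynomial, whereas the first statement is essentially a bookkeeping argument once the Dehn twist subroutine is in hand.
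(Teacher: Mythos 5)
Your overall route is the same as the paper's: convert $\alpha$ and the Lickorish generators to SLP intersection sequences via Lemma \ref{lm: get twins}, realize each $\tau_{s_i}^{k_i}$ by splicing $I_T(s_i)^{k_i}$ (built with Lemma \ref{lm: slp}(b)) into the current SLPs at the crossings with $s_i$, and finally cyclically reduce with compressed-word machinery and read off normal coordinates with Lemma \ref{lm: slp}(c). One cosmetic correction to the splicing step: there is no ``symbol $s_i$'' in the alphabet $E^{\pm}$; the substitution acts on each occurrence of each edge $e_{s_i,j}\in E_T(s_i)$, and the inserted copy must be the rotation of the generator's intersection sequence that starts at that edge (this is why the paper precomputes $I_T(s,j)$ for every $j$).

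The substantive problem is your claimed $O(1)$-per-twist substitution via ``factoring every literal through a dedicated variable and re-binding it.'' A block spliced in at step $i$ is a copy of the \emph{original} curve $s_i$, so it must \emph{not} be affected by the twists already performed at steps $i'<i$, but it \emph{must} be affected by the twists still to come at steps $i''>i$. A single shared mutable node per edge cannot satisfy both constraints: if the new block references the shared (already-mutated) edge nodes it wrongly inherits the past twists, and if it references pristine literals it is wrongly invisible to future twists. The paper resolves this by rewriting the entire SLP at each step---substituting occurrences of $e_{s_i,j}$ everywhere, including inside previously inserted blocks---which costs $O(\|I_T(\beta_{i-1})\|)$ per step and is exactly the source of the $n|T|\lg k$ term (the $\Theta$ being witnessed by $k_i\equiv 1$). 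Your claimed total of $O(g|T|+|T|\lg k+ng)$ would in fact contradict the stated $\Theta$, and your $\Omega$ justification (input/output size) does not yield $\Omega(n|T|\lg k)$. The second half of your argument (cyclic reduction, then per-edge occurrence counts) matches the paper, which invokes a deterministic $O(\|\cdot\|^{10})$ compressed cyclic-reduction algorithm to justify the $O((|T|\lg k)^{10})$ bound.
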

\begin{proof}
   For each $s\in \mathcal{L}$, denote its edge list by $E_T(s)=\{e_{s,1},\dots,e_{s,\|E_T(s)\|}\}$, where $e_{s,j}\in E$ for $1\leq j\leq \|E_T(s)\|$, and define $I_T(s,j)$ as (an SLP of) the intersection sequence of a normal curve isotopic to $s$ starting at a face adjacent to $e_{s,j}$. 
   We can use Lemma \ref{lm: get twins} to compute $I_T(s,j)$ for each $1\leq j\leq  \|E_T(s)\|$ in time $O(g|T|)$.
   We also compute $I_T(a)$ for each $a\in\alpha$; in total, this initial step takes time $O(g|T|)$. 
   
    Let $I_T(\beta_0)=I_T(\alpha)$.
    Assuming $k_i>0$, we recursively construct $I_T(\beta_i)$, for $1\leq i \leq n$, by replacing any occurrence of an edge $e_{s_i,j} \in E_T(s_i)$ in the SLPs $I_T(\beta_{i-1})=\{\langle x_1=\text{EXPR}_1,\dots, x_p=\text{EXPR}_p\rangle\}_{b\in\beta}$ with an SLP $\langle y_1,\dots,y_{q}\rangle$ of $I_T(s_i)^{k_i}\cdot s_{i,j}$, that is,
    \begin{equation*}
        I_T(\beta_i) = \{\langle  y_1,\dots,y_{q}, x'_{q+1}=\text{EXPR}'_1,\dots, x'_{q+p}=\text{EXPR}'_q \rangle\}_{b\in\beta} 
    \end{equation*}
    where $\text{EXPR}'_r$ is the word $\text{EXPR}_r$ with each occurrence of $x_t$ substituted by $x'_{p+t}$ and each occurrence of $s_{i,j}$ substituted by $y_q$. 
    The case of $k_i<0$ is treated similarly. 
    By Lemma \ref{lm: slp}, we note that this can be done in time $O(\|I_T(\beta_{i-1})\|+\lg k_i\|I_T(s_i)\|)$ in the assumed model of computation.
    In particular, $I_T(\beta_i)$ is an intersection sequence representation of $\beta_i=\tau^{k_i}_{s_i}\circ \dots \circ \tau^{k_1}_{s_1}(\alpha)$ and has complexity
    \begin{equation*}
    \begin{split}
        O(\|I_T(\beta_i)\|)&=O(\|I_T(\beta_{i-1})\|+\lg k_i\|I_T(s_i)\|)\\
        &= O\bigg(\|I_T(\alpha)\|+\sum_{i'=1}^{i}\lg k_{i'} \|I_T(s_{i'})\|\bigg)\\
        &\leq O\bigg(|T|+|T|\sum_{i'=1}^{i}\lg k_{i'} \bigg)=O\bigg(|T|\sum_{i'=1}^{i}\lg k_{i'} \bigg),
    \end{split}
    \end{equation*}
    where in the second line we recursively applied the equation of the first.
    The total time to compute $I_T(\beta)=I_T(\beta_n)$ is bounded by $n\times \|I_T(\beta)\|=O(n|T|\lg k)$, and, in the case $k_i=1$ for all $1\leq i \leq n$, this becomes $\Theta(n^2|T|)=\Theta(n|T|\lg k)$.

    For the second part, note that while $I_T(\beta)$ is, by construction, standard in $T$, it might not be normal.
    We can use the main algorithm of \cite{miyazaki1997improved} to deterministic compute, in time $O(\|I_T(\beta)\|^{10})$, cyclic reduced SLPs for $\beta$ with complexity $O(\|I_T(\beta)\|^2)$. 
    Alternatively, this can be done using a randomized algorithm in time $O(\|I_T(\beta)\|^{3})$ with some probability of error at most $\delta$ and in a unit-cost RAM model of bit-size $O(\|I_T(\beta)\|+\lceil1/\delta\rceil)$.
    Part (c) of Lemma \ref{lm: slp} finishes the proof.
\end{proof}

\section{Algorithms on Heegaard diagrams}\label{sec: algorithms}
We here describe some polynomial time algorithms and operations on a genus $g$ Heegaard diagram $(T, E_T(\alpha),N_T(\beta))$ with complexity $m$ and representing a closed 3-manifold $M$. 
For convenience, the algorithms and their complexity are summarized in Table \ref{tb: algorithms}.
For the reported times, we assume a unit-cost RAM model of bit-size $\log |\beta \cap E|=O(\|N_T(\beta)\|/|T|)$. 
\begin{theorem}[\texttt{CHECK DIAGRAM}]\label{th: check diagram}
    There is an algorithm to check, in time $O(m|T|)$, whether a normal multicurve $\beta$ of complexity $m$ defines a Heegaard diagram.
\end{theorem}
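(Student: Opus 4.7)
The plan is to reduce the problem to two simple combinatorial checks, both performed on the street complex of $\beta$. Recall that a (minimal) system in $\Sigma_g$ is an essential multicurve $\beta$ with $\#\beta = g$ whose complement is a $2g$-punctured sphere. Euler characteristic considerations make this equivalent to the conjunction of (i) $\#\beta = g$ and (ii) $\Sigma_g \setminus \beta$ is connected: the cut surface satisfies $\chi(\Sigma_g \setminus \beta) = \chi(\Sigma_g) = 2 - 2g$ (each component of $\beta$ has Euler characteristic $0$) and has exactly $2\#\beta$ boundary circles, so if it is connected and $\#\beta = g$ its genus $g'$ must satisfy $2 - 2g' - 2g = 2 - 2g$, i.e. $g' = 0$.

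First, in time $O(|T|)$, I would compute $g = 1 - \chi(T)/2$ from $T$, read off $\#\beta$ directly from the given set $N_T(\beta) = \{N_T(c)\}_{c\in\beta}$, and reject immediately if these disagree. Second, I would invoke Theorem~\ref{th: erickson}(a) to build the street complex $S(T,\beta)$ in time $O(m|T|)$; this is a cellular graph on $\Sigma_g$ of complexity $O(|T|)$ whose edges are partitioned into \emph{ports} (lying along edges of $T$) and \emph{elementary arcs} (lying along $\beta$). Third, I would construct the face-adjacency graph $G^\ast$ whose vertices are the faces of $S(T,\beta)$ and whose edges join two faces sharing a port, and test its connectivity by a single BFS. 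Since $|S(T,\beta)| = O(|T|)$, the last two substeps together cost $O(|T|)$, so the dominating cost is the $O(m|T|)$ street-complex construction.

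Correctness of the connectivity test rests on the observation that $\Sigma_g \setminus \beta$ is obtained from the closed faces of $S(T,\beta)$ by gluing along ports only, so the connected components of $G^\ast$ are in bijection with those of $\Sigma_g \setminus \beta$. Hence steps (1)--(3) together decide exactly condition (i)+(ii), which is equivalent to $\beta$ being a minimal system.

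The main subtlety I expect is the essentialness condition: a normal multicurve can in principle contain a null-homotopic component (for example a small curve encircling a vertex of $T$), and a minimal system forbids this. I would handle it by observing that the connectivity test already enforces essentialness: if some component $c \in \beta$ bounds a disk $D \subset \Sigma_g$, then cutting along $c$ alone already separates $D$ from $\Sigma_g \setminus \bar{D}$, and cutting further along the remaining components can only subdivide each side, never reconnecting them; hence $\Sigma_g \setminus \beta$ would be disconnected and the BFS would correctly reject. Thus no separate essentialness test is needed, and the algorithm runs in the claimed $O(m|T|)$ time.
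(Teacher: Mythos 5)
Your proposal is correct and takes essentially the same route as the paper: both build the street complex $S(T,\beta)$ via Theorem~\ref{th: erickson} in the dominating $O(m|T|)$ time and then verify that $\Sigma_g\setminus\beta$ is a single $2g$-punctured sphere by examining the pieces of the complement. The only (cosmetic) difference is that you certify this by checking $\#\beta=g$ plus connectivity of the face-adjacency graph and deducing genus zero from an Euler characteristic count, whereas the paper computes the Euler characteristic of each piece directly; the two checks are equivalent.
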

\begin{proof}
    Use a generalization of Theorem \ref{th: erickson}~\cite{erickson2002optimally}[Theorem 6.5] to trace the street complex $S(T,\beta)$, this naturally induces triangulations of the components in $T\backslash \beta$, which, following \cite{Erickson2013}, we call \textit{pieces}.
    We compute the Euler characteristic of each piece in time $O(|T|)$ \cite{erickson2002optimally}, ultimately giving their topological type. 
    The vectors $N_T(\beta)$ define a Heegaard splitting if and only if the only piece is a $2g$ punctured sphere.
\end{proof}

\begin{theorem}[\texttt{GET EFFICIENT POSITION}]\label{th: get efficient position}
    There is an algorithm to isotope $\alpha$ and $\beta$ to efficient position in $\Sigma_g$ in time $O(\poly(m,|T|))$. 
\end{theorem}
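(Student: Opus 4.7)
The plan is to invoke the minimal-position algorithm of Lackenby~\cite{lackenby2024some} for normal multicurves, which translates almost directly to our setting. By the bigon criterion recalled in Section~\ref{sec: background topology}, $\alpha$ and $\beta$ are in efficient position precisely when no pair of subarcs of $\alpha$ and $\beta$ cobounds a bigon in $\Sigma_g$; equivalently, when they realize the minimum number of transverse intersections over all isotopy classes. Since efficient position is a property of the chosen representatives rather than a relation between them, it suffices to hold $\alpha$ fixed (already edged in $T$) and isotope only $\beta$ into a normal multicurve $\beta'$ that achieves minimum intersection with $\alpha$.

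First, I would feed the normal coordinates $N_T(\beta)$ together with the edge list $E_T(\alpha)$ into the minimal-position subroutine of \cite{lackenby2024some}, which is designed to accept exactly this combination: a normal multicurve given by normal coordinates together with an edged multicurve given by an edge list, both with respect to a common triangulation. The output is the normal coordinate vector $N_T(\beta')$ of a multicurve isotopic to $\beta$ and in minimal position with $\alpha$, and the procedure runs in polynomial time in the input complexities. If the output is packaged as a single summed vector rather than the per-component list our data structure expects, Lemma~5.1 of \cite{Erickson2013} recovers the per-component normal coordinates in additional polynomial time. The final output is $(T, E_T(\alpha), N_T(\beta'))$, which is a Heegaard diagram of the same splitting with $\alpha$ and $\beta'$ in efficient position.

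The total running time is then polynomial in $m$ and $|T|$, inheriting directly from the bounds of the invoked subroutines. The main obstacle I anticipate is bookkeeping rather than algorithmic: one must carefully align the input/output conventions and complexity measure of \cite{lackenby2024some} with our compressed setting $m=|T|+\|E_T(\alpha)\|+\|N_T(\beta)\|$, and in particular verify that the polynomial bound there remains polynomial when the complexity of $\beta$ is measured by the bit-length of its normal coordinates rather than by the raw intersection count with edges. Should one wish to also return $\alpha$ in isotoped form (for instance as normal coordinates instead of edged), Lemma~\ref{lm: get twins} can first promote $\alpha$ to a normal representation, after which a normal-normal minimal-position variant from the same reference can be applied symmetrically without affecting the overall polynomial complexity.
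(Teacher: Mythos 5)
Your proposal matches the paper's proof: both simply invoke Lackenby's minimal-position algorithm (Theorem 6.3 of \cite{lackenby2024some}) applied to the normal multicurve $\beta$ against the edged multicurve $\alpha$, with the polynomial bound inherited from that reference. The only detail the paper makes explicit that you leave implicit is the appeal to Corollary 6.5 of \cite{lackenby2024some}, which bounds the number of bigons to be removed by $154|T|$ and thereby controls the number of iterations; this is subsumed in the ``bookkeeping'' you flag at the end.
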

\begin{proof}
    Lackenby describes an algorithm for this problem in Theorem 6.3 of~\cite{lackenby2024some}, where we implicitly use the content of his Corollary 6.5, namely that the number of bigons between the $\alpha$ and $\beta$ curves is upper bounded by $154|T|$.
\end{proof}

\begin{theorem}[\texttt{CHECK ISOTOPY}]\label{th: check isotopy}
    There is an algorithm to check, in time $O(\poly(m,|T|))$, whether two Heegaard diagrams, $\beta$ and $\beta'$, of complexity at most $m$, are isotopic.
\end{theorem}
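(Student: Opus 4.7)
The plan is to reduce the isotopy test to a direct comparison of total normal coordinate vectors. First I would compute $v := \sum_{c \in \beta} N_T(c)$ and $v' := \sum_{c' \in \beta'} N_T(c')$ as integer vectors in $\mathbb{N}^{|E|}$; since each entry of $N_T(\cdot)$ is stored in binary, the sums have total bit-complexity $O(m)$ and can be assembled in time $O(m)$ under the assumed RAM model. The algorithm then returns \textbf{YES} if and only if $v = v'$, an entrywise comparison that also costs $O(m)$, yielding a running time polynomial in $m$ and $|T|$.

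The correctness rests on the classical rigidity theorem for normal multicurves: two essential normal multicurves in a triangulated surface $\Sigma_g$ are ambient isotopic if and only if their total edge-intersection vectors agree. One direction is immediate, since normal coordinates are isotopy invariants. For the converse, the three edge-counts at each face $t$ of $T$ determine, via the standard linear relation among the three corner types of elementary arc, the number of arcs of each type inside $t$; this fixes the pattern of elementary arcs throughout $\Sigma_g$, and the way the arcs match along shared edges is forced by disjointness of the curves, so the resulting multicurve is unique up to isotopies internal to the triangles. Essentiality of $\beta$ and $\beta'$ is guaranteed by their coming from Heegaard diagrams and can be certified by \texttt{CHECK DIAGRAM} (Theorem~\ref{th: check diagram}), which also rules out the vertex-linking degeneracies that could otherwise break rigidity.

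The main obstacle is justifying the rigidity claim carefully, particularly when two components of $\beta$ share a vertex of $T$, where the recovery of the componentwise decomposition of $\beta$ from the sum vector $v$ is not completely transparent and must be traced through an edge-matching argument in each face. If this direct argument proves delicate, the same polynomial-time bound can be obtained by invoking Lackenby's isotopy-testing machinery from \cite{lackenby2024some} (already used in the proof of Theorem~\ref{th: get efficient position}): one applies it componentwise after fixing a candidate bijection between components of $\beta$ and $\beta'$ via the normal coordinate vectors, incurring only a polynomial overhead.
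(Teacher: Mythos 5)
Your primary algorithm has a genuine gap: the ``only if'' direction of the rigidity claim you rely on is false. Normal coordinates determine the isotopy class of a normal multicurve (a vector yields a unique class up to \emph{normal} isotopy, i.e.\ isotopy preserving the triangulation's cell structure), but the converse fails: an ambient isotopy of $\Sigma_g$ can slide an arc across a \emph{vertex} of $T$, changing which edges it meets while keeping the curve normal. So two isotopic normal multicurves can have different coordinate vectors, and your test $v=v'$ would report \textbf{NO} on isotopic inputs. Your sentence ``normal coordinates are isotopy invariants'' is exactly the false step --- they are only normal-isotopy invariants. This non-uniqueness of normal representatives is precisely why the paper's proof is not a vector comparison but a citation of Lackenby's Theorem~1.2 of \cite{lackenby2024some}, which is a genuine (and nontrivial) polynomial-time isotopy test for curves given in normal coordinates. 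The face-by-face corner-coordinate argument you sketch only establishes uniqueness up to isotopies supported inside the triangles, which is the easy direction and does not address vertex crossings; nor does \texttt{CHECK DIAGRAM} rule them out.

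Your fallback --- invoking Lackenby's machinery --- is essentially the paper's proof, but as you set it up it inherits the same flaw: you cannot ``fix a candidate bijection between components via the normal coordinate vectors,'' since corresponding components of isotopic multicurves need not have equal (or even recognizably related) coordinates. The correct move is simply to apply Lackenby's isotopy decision procedure to the two multicurves $\beta$ and $\beta'$ directly, which is what the paper does.
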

\begin{proof}
    Once more, this follows from Lackenby's paper \cite{lackenby2024some}, this time, from Theorem 1.2.
\end{proof}

\begin{localsize}{10}
\begin{table}
\renewcommand{\arraystretch}{1.5}
\begin{tabular}{ p{3.5cm} p{5cm} c c}
\hline
Algorithm name & Description & Complexity & Reference \\  
\hline
 \texttt{CHECK DIAGRAM}& Decide if normal coordinates define a diagram& $O(m|T|)$ & Theorem \ref{th: check diagram}\\
 \texttt{GET EFFICIENT POSITION}& Put $\alpha$ and $\beta$ curves in efficient position & $O(\poly(m,|T|))$ & Theorem \ref{th: get efficient position}\\
 \texttt{CHECK ISOTOPY}& Decide if two diagrams of complexity $m$, on the same marked triangulation, are isotopic& $O(\poly(m,|T|))$& Theorem \ref{th: check isotopy}\\
  \texttt{DO A PACHNER MOVE}& Do a Pachner move in the underlying triangulation& $O(1)$ & Theorem \ref{th: do a pachner move}\\
 {\texttt{DO A DISK SLIDE}}& Slide a disk $b_i$ over a disk $b_j$& $O(m|T|)$ & Theorem \ref{th: do a disk slide}\\
{\texttt{DO STABILIZATION}}& Compute a stabilization of the diagram& $O(1)$ & Theorem \ref{th: do stabilization}\\
{\texttt{DO DESTABILIZATION}}& Detect if there is a trivial stabilization pair and compute destabilization& $O(g|T|)$& Theorem \ref{th: do destabilization}\\
{\texttt{DETECT REDUCTION}}& Find either a common component of $\alpha$ and $\beta$ or a separating curve disjoint from both& $O(\poly(g,m,|T|))$ & Theorem \ref{th: detect reduction}\\
{\texttt{DO A DEHN TWIST}}& Compute the action of a Dehn twist about a normal curve $s$ with complexity $m'$ on $\beta$ & $O(((m+m')|T|)^{10})$& Theorem \ref{th: do a dehn twist}\\
\texttt{GET }$\pi_1$ & Compute a presentation of the fundamental group of $M$ & $O(m|T|)$& Theorem \ref{th: get pi1}\\
\texttt{GET }$H_1$ & Compute a presentation of the first homology group of $M$& $\widetilde{O}(m^2g^3+m|T|)$& Theorem \ref{th: get h1}\\
 \hline
\end{tabular}
\caption{}\label{tb: algorithms}
\end{table}
\end{localsize}

\begin{figure}
    \centering
    \includegraphics[width=0.45\textwidth]{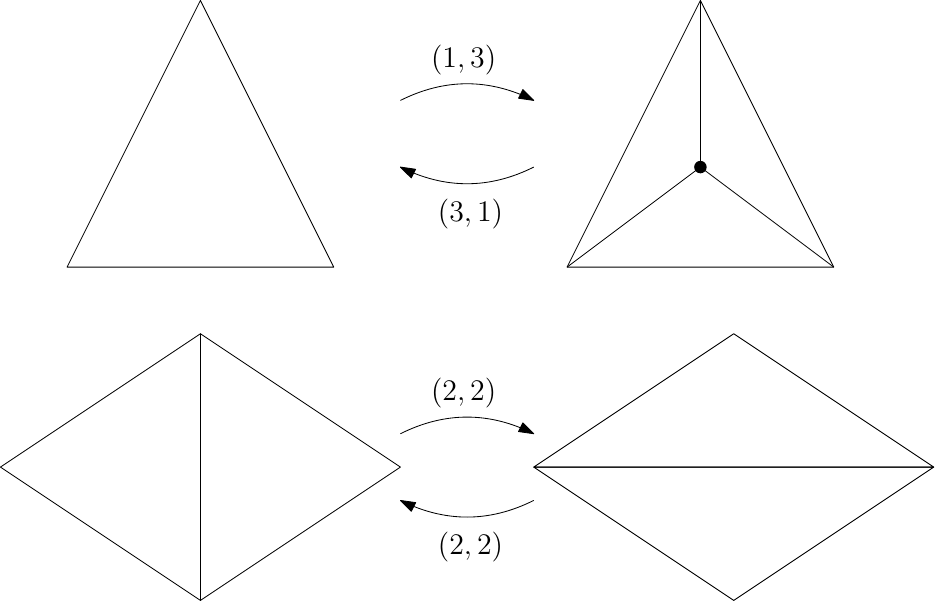}
     \caption{}
  \label{fig: pachner}
\end{figure}

\begin{theorem}[\texttt{DO A PACHNER MOVE}]\label{th: do a pachner move}
    A diagram $(T',E_{T'}(\alpha),N_{T'}(\beta))$, where $T'$ is the result of a Pachner move applied to any set of triangles in $T$, can be computed in time $O(1)$.
\end{theorem}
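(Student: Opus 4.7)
The plan is to show that a Pachner move affects only a bounded number of simplices and a bounded number of entries of $E_T(\alpha)$ and $N_T(\beta)$, so that every component of the diagram can be rebuilt locally. Recall that on a surface, Pachner moves come in two types: the \emph{2-2 (flip) move}, which replaces two triangles sharing an edge $e$ by the two triangles obtained by flipping $e$ to the other diagonal $e'$; and the \emph{1-3 move}, which inserts a vertex in the interior of a triangle together with three new edges and two new triangles. In either case only $O(1)$ triangles, edges, and vertices are created or destroyed, and under our assumed data structure on $T$, updating the bidirectional pointers to produce $T'$ costs $O(1)$.

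Next I would update $E_T(\alpha)$. For the 1-3 move no old edge is deleted, so $E_T(\alpha)$ is already a valid edge list in $T'$ and nothing needs to be done. For a 2-2 move, if the flipped edge $e$ belongs to $E_T(\alpha)$, then since $\alpha$ is simple and its components are pairwise edge-disjoint, $e$ appears in $E_T(\alpha)$ exactly once; I would substitute this occurrence by the two outer edges of one of the two triangles adjacent to $e$, thereby rerouting the curve on one side of $e$. This is an $O(1)$ list surgery.

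The main content is updating $N_T(\beta)$. For a 2-2 flip, the normal coordinate of the new diagonal $e'$ is determined by a well-known closed form that involves only the normal coordinates of $e$ and of the four outer edges of the two triangles; equivalently, one passes to the corner coordinates of the two affected triangles, which are determined in $O(1)$ from the normal coordinates of the six involved edges, and then reads off the new corner (hence normal) coordinates across the flipped diagonal. For a 1-3 move, the three newly introduced edges' normal coordinates are determined by the corner coordinates of the single subdivided triangle, which are themselves determined in $O(1)$ from the three bounding edges. All other entries of $N_T(\beta)$ are unchanged. Every arithmetic operation involved is $+$, $\max$, and $|\cdot|$ on integers bounded by $|\beta \cap E|$, each fitting in the assumed word size, so this final step runs in $O(1)$.

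The only delicate point is the 2-2 case when $\alpha$ runs along the flipped edge, but because the data structure gives constant-time access to the triangles incident to a given edge and to their edges and vertices, the rerouting in $E_T(\alpha)$ is a purely local substitution and does not interact with the update to $N_T(\beta)$, which is carried out independently on the opposite pair of triangles.
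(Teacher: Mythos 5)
Your core argument --- locality of the move, updating $N_T(\beta)$ via the corner coordinates of the $O(1)$ affected triangles, and observing that all arithmetic fits in the word size --- is exactly the paper's argument, and that part is fine. The two proofs diverge on the interaction with $\alpha$: the paper simply \emph{assumes} that no move deletes an edge of $E_T(\alpha)$ (``otherwise, we would not have a marked triangulation as an output''), whereas you try to handle a flip of an edge $e\in E_T(\alpha)$ by rerouting the curve along the two outer edges of an adjacent triangle. That is more ambitious, but as written it has a gap: the two target edges may already occur in $E_T(\alpha)$ (either in the same component or in another one), and the data structure explicitly forbids components sharing an edge --- indeed a simple edged curve cannot traverse an edge twice at all. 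In a generalized triangulation the outer edges of the two triangles may even be identified with $e$ itself or with each other, so the substitution can fail to produce a valid edge list. You would need either to argue that a conflict-free side always exists, or to fall back on the paper's convention of disallowing such moves.

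Two smaller omissions: you never treat the $3\to 1$ move (vertex removal), which the paper does include and which is one of the three surface Pachner moves; and for that move the naive ``delete the three coordinates'' update is itself delicate, since a normal curve may spiral around the removed degree-$3$ vertex and fail to be normal in the coarser triangulation. Neither you nor the paper addresses this last point, but since you restrict attention to the $1\to 3$ and $2\to 2$ moves you should at least say so explicitly.
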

\begin{proof}
First, let $\gamma$ be a normal (multi)curve with respect to a triangulation $T$. 
Having fixed a triangle in $T$, we call its \emph{corner coordinates} the number of elementary arcs of each type.
In particular, note that for the triangle $t$ with sides $e$, $f$, and $g$, the number of elementary arcs between sides $e$ and $f$ is $1/2\times[e(\gamma)+f(\gamma)-g(\gamma))]$ and symmetrically for the other cases.
Given $e(\gamma), f(\gamma)$, and $g(\gamma)$, one can use these formulas to compute the corner coordinates of $t$ in time $O(1)$.

Now refer to the two moves of Figure \ref{fig: pachner}. 
These modifications of the underlying triangulation are known as \emph{Pachner moves} and have received special attention in low-dimensional topology because any two triangulations are related by finitely many applications of them~\cite{pachner1991pl}. 
Here, we assume that no such move deletes an edge in $E_T(\alpha)$; otherwise, we would not have a marked triangulation as an output.
Move $(3,1)\to(1,3)$ corresponds to the simple deletion of the relevant components of $N_T(\beta)$, move $(1,3)\to (3,1)$ involves adding new components whose values are corner coordinates. 
Both are implemented in time $O(1)$ in our data structure for $T$. 
Similarly, moves $(2,2)\to(2,2)$ can be implemented in time $O(1)$ directly by computing the corner coordinates and updating the triangulation. 
\end{proof}

\begin{theorem}[\texttt{DO A DISK SLIDE}]\label{th: do a disk slide}
    A diagram $(T,E_T(\alpha),N_T(\beta'))$, where $\beta'$ is the result of applying a disk slide on any two components of $\beta$, can be computed in time $O(m|T|)$.
\end{theorem}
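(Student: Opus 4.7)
The plan is to realize the disk slide combinatorially at the level of normal coordinates. Fix the two components $b_i, b_j \in \beta$ selected for the slide and an arc $p$ in $\Sigma_g \setminus \beta$ joining the puncture at $b_i$ to the puncture at $b_j$. The band sum of $b_i$ and $b_j$ along $p$ (Figure \ref{fig: band sum}) is a curve $c$ whose normal coordinates satisfy
\begin{equation*}
N_T(c) \;=\; N_T(b_i) + N_T(b_j) + 2\, N_T(p),
\end{equation*}
since $c$ is obtained from $b_i$ and $b_j$ by excising two small arcs contained in the interiors of faces of $T$ and reconnecting through two parallel pushoffs of $p$. Thus the problem reduces to producing a suitable normal arc $p$ inside the piece $\Sigma_g \setminus \beta$ and computing its coordinate vector.

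First, apply Theorem \ref{th: erickson}(a)-(b) to obtain, in time $O(m|T|)$, the street complex $S(T,\beta)$ and SLPs for the crossing sequences of all streets, of total complexity $O(m|T|)$. Since $(T, E_T(\alpha), N_T(\beta))$ is a valid Heegaard diagram, by Theorem \ref{th: check diagram} the piece $\Sigma_g \setminus \beta$ is a single $2g$-punctured sphere whose junctions and streets form a planar graph $\Gamma$ on $O(|T|)$ vertices and edges. A breadth-first search on $\Gamma$ starting from the boundary corresponding to $b_i$ and terminating at the boundary corresponding to $b_j$ returns, in time $O(|T|)$, a simple dual path $\pi=(s_1,\dots,s_\ell)$ of $\ell=O(|T|)$ streets; this determines an embedded normal arc $p$ whose crossing sequence in $T$ is the concatenation of the crossing sequences of the $s_k$.

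Next, compute $N_T(p) = \sum_{k=1}^{\ell} N_T(\mathrm{cross}(s_k))$ by counting symbol occurrences in each street SLP via Lemma \ref{lm: slp}(c) and aggregating them into a vector in $\mathbb{N}^{|E|}$. Form $N_T(c) = N_T(b_i) + N_T(b_j) + 2\, N_T(p)$ with $O(|T|)$ integer additions, and output $(T, E_T(\alpha), N_T(\beta'))$ by replacing the entry $N_T(b_i)$ in $N_T(\beta)$ by $N_T(c)$; the marked triangulation, the edged system $\alpha$, and the remaining components of $\beta$ are unchanged.

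The main obstacle is bounding the cost of the symbol counting in the third step: a naive implementation that queries each of $|T|$ edges separately in each of the $O(|T|)$ street SLPs would cost $\Theta(m|T|^2)$. To meet the advertised $O(m|T|)$ bound one must exploit the fact that each edge of $T$ is partitioned by $\beta$ into ports shared by only a controlled number of streets, so that all nonzero counts can be harvested in a single dynamic-programming sweep over the aggregate SLP of total complexity $O(m|T|)$. A minor subtlety to verify is that the small arcs excised from $b_i$ and $b_j$ to form $c$ can be chosen in the interiors of faces of $T$, so that the additive identity for $N_T(c)$ holds exactly and $c$ is essential --- the latter being standard since $p$ is embedded in a punctured sphere.
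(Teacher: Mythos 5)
Your overall strategy (street complex, shortest dual path $p$, band sum along $p$) matches the paper's, but there is a genuine gap at the decisive step: you never verify that the band-sum representative $c$ is \emph{normal} with respect to $T$. The vector $N_T(b_i)+N_T(b_j)+2N_T(p)$ records the edge-crossing numbers of one particular positioning of $c$, but normal coordinates encode an isotopy class only when the curve is in normal position; if the concatenation admits reductions (e.g.\ $p$ enters the face where it attaches to $b_i$ through an edge that $b_i$ also crosses, creating an arc that returns to the same edge), the stored vector is simply not the normal coordinate vector of $c$, and every downstream algorithm that assumes normality breaks. The paper's proof spends most of its effort exactly here: it shows that because $p$ is a \emph{shortest} path with endpoints at junctions of $S(T,\beta)$, the concatenated intersection sequence $I_T(b_2)^{-1}\circ I_T(p)^{-1}\circ I_T(b_1)\circ I_T(p)$ is already cyclically reduced. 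Your BFS gives you a shortest path, so this argument is available to you, but it must be made.

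The second, sharper failure is the degenerate case where $b_i$ and $b_j$ are adjacent to a common face of $S(T,\beta)$, so the dual path is trivial and $N_T(p)=0$. Your formula then outputs $N_T(b_i)+N_T(b_j)$, which is the normal coordinate vector of the \emph{disjoint union} $b_i\sqcup b_j$ (normal coordinates are additive over disjoint components and determine the normal isotopy class uniquely); it cannot simultaneously be the coordinates of the connected band sum. So in this case the band sum, positioned with those crossing numbers, is necessarily non-normal and your output is wrong. The paper handles this case by a separate construction (routing the new curve around a ``good'' face of $S(T,\beta)$ bounded by $b_1$ but not $b_2$). Finally, your worry about the $\Theta(m|T|^2)$ cost of harvesting all edge counts is resolvable in $O(m|T|)$ by a single top-down pass that counts occurrences of each \emph{variable} in the expansion and then attributes the counts of simple assignments to their edges, but as written you flag the obstacle without closing it.
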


\begin{proof}
    Compute the street complex $S(T,\beta)$; recall that it has complexity of $O(|T|)$. 
    Using breadth-first in the dual graph of $S(T,\beta)$, find, in time $O(|T|^2)$, the shortest arc $p$ that connects faces adjacent to $b_1$ and $b_2$ and that does not cross any component of $\beta$, which can be avoided by adding infinite weights to the edges of $S(T,\beta)$ not in $T$. 
    Note that such an arc must exist because no component of $\beta$ can be separating in the surface. 
    Moreover, because $p$ is shortest and, therefore, intersects at most $O(|T|)$ streets in $S(T,\beta)$, we can compute, using Part (b) of Theorem \ref{th: erickson}, an SLP of complexity $O(m|T|+|T|)=O(m|T|)$ of an intersection sequence of $p$ with respect to $T$, call it $I_T(p)$.
    
    Suppose that $p$ is not trivial, that is, it is not just a single point.
    By Theorem 6.2 of \cite{Erickson2013}, find, in time $O(m|T|)$, SLPs of the intersection sequences of $b_1$ and $b_2$ with respect to $T$ starting at edges adjacent to the endpoint faces of $p$, call them $I_T(b_1)$ and $I_T(b_2)$.
    Finally, let $I_T(b)=I_T(b_2)^{-1}\circ I_T(p)^{-1}\circ I_T(b_1)\circ I_T(p)$.  
    Note that $I_T(b)$ already is an intersection sequence of a normal curve in $T$.
    For, since $p$ is a shortest path, it is normal in $S(T,\beta)$, which implies that it is also normal in $T$.
    Moreover, $p$ has junctions as endpoints; if it had streets as endpoints, there would be a shorter arc connecting $b_1$ to $b_2$ starting at one of the ends of the street.
    Therefore, any other reduction in $I_T(b)$ would necessarily arise by $p$ entering a junction of $S(T, \beta)$ through a non-redundant port $e$ that is also intersected by either $b_1$ or $b_2$.
    But then, the other triangle in $T$ adjacent to $e$ is adjacent to either $b_1$ or $b_2$, implying that $p$ is not shortest. 
    By repeated applications of Part (c) of Lemma \ref{lm: slp}, one can compute, in time $O(|T|\times m)$, the number of (unsigned) occurrences of each edge of $E$ in the SLP of $I_T(b)$, resulting in $N_T(b)$. 

    If $p$ is trivial, then $b_1$ and $b_2$ bound the same face in $S(T,\beta)$. 
    Nonetheless, because the curves are not isotopic, there exists at least one face bounded by $b_1$ and not bounded by $b_2$. 
    We call such a face \emph{good}. 
    Choose, in time $O(|T|)$, a non-good face $t_1$ that is adjacent to a good face through an edge $e_1$, see Figure \ref{fig: stabilization}.
    Note that $e_1$ is the endpoint of a street $s$ of sides in $b_1$ and $b_2$. 
    Look for the other end of this street, $e_2$---again, because $b_1$ and $b_2$ are not isotopic, the street has another end. 
    Let $t_2$ be the face (in fact, a junction) not equal to $s$, but also adjacent to $e_2$. 
    Compute, again through~\cite{Erickson2013}[Theorem 6.2], the sequence of faces adjacent to $b_1$ starting at $t_1$ and going all the way up to $t_2$ without crossing $s$, call this sequence $I_T(\widetilde{b_1})$.
    Compute the related sequence for $b_2$, this time stating in $t_2$, call it $I_T(\widetilde{b_2})$.
    Let $I_T(b) = I_T(\widetilde{b_2})\circ I_T(\widetilde{b_1})$, note that it is also normal in $T$ as $b_1$ and $b_2$ are normal and it enters $t_1$ and $t_2$ through distinct edges.
    Part (c) of Lemma \ref{lm: slp} finishes this case.
\end{proof}

\begin{figure}
    \centering
    \includegraphics[width=0.8\textwidth]{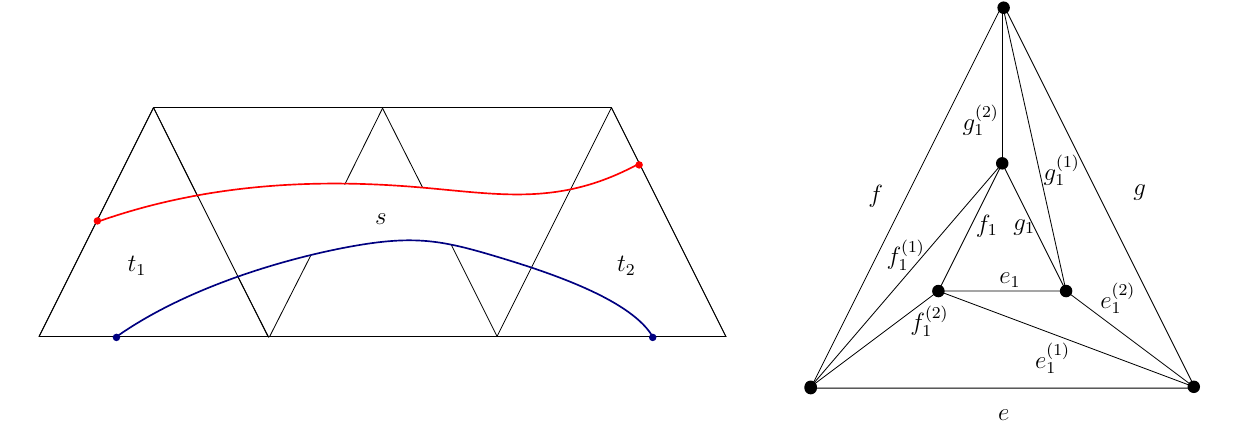}
     \caption{Left: diagram for the proof of Theorem \ref{th: do a disk slide}. The red and blue curves represent $b_1$ and $b_2$, the black lines are the other edges of $S(T,\beta)$. The faces $s$, $t_1$, and $t_2$ are also indicated. Right: diagram for the proof of Theorem \ref{th: do stabilization}. Figure inspired by \cite{ennes2025hardnesscomputationquantuminvariants}.}
  \label{fig: stabilization}
\end{figure}

\begin{theorem}[\texttt{DO STABILIZATION}]\label{th: do stabilization}
    A diagram $(T',E_{T'}(\alpha\cup\{a\}),N_{T'}(\beta\cup\{b\}))$ of a stabilization of $(T,E_T(\alpha),N_T(\beta))$ can be computed in time $O(1)$.
\end{theorem}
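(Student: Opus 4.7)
The plan is to exploit that a stabilization is a purely local modification of the diagram: it excises a tiny open disk $D \subset \Sigma_g$ disjoint from $\alpha \cup \beta$ and glues in a once-punctured torus $P$ carrying a pair of curves $a, b$ with $|a \cap b| = 1$ lying in the interior of $P$. Since $P$ requires only a constant number of triangles and the modification affects the original $(T, \alpha, \beta)$ only inside $D$, every datum of the output can in principle be written down in $O(1)$ time, provided we can locate $D$ in $O(1)$ time.

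To do so, I would fix a single triangle $t$ of $T$ (say, the first one in the data structure) and inspect its three corner coordinates $c_1, c_2, c_3$ of $\beta$, which are all available in $O(1)$ from $N_T(\beta)$ via the linear change of coordinates between edge and corner coordinates. If some $c_i = 0$, no elementary arc of $\beta$ in $t$ wraps around the corresponding vertex, so a small open disk $D$ lying strictly inside $t$ and close to that vertex is disjoint from $\beta$. If instead all $c_i > 0$, then the three innermost elementary arcs of $\beta \cap t$ bound a central triangular region inside $t$ disjoint from $\beta$, and we place $D$ there. In either case, $D$ lies in the open interior of $t$ and is therefore automatically disjoint from $\alpha$, because $\alpha$ is edged and thus supported on the $1$-skeleton of $T$. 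The whole search consults $O(1)$ data.

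Next I would glue in a punctured torus $P$ triangulated with a constant number of triangles and place $a$ and $b$ inside $P$, disjoint from $\partial P$, as a constant-size edge list $E_{T'}(a)$ and a constant-support normal-coordinate vector $N_{T'}(b)$ respectively. The data structure is then updated: $E_T(\alpha)$ receives the single new entry $E_{T'}(a)$, the old vectors in $N_T(\beta)$ are unchanged on the edges inherited from $T$ (since $\beta$ is untouched outside $D \subset t$) and are extended by zeros on the $O(1)$ new edges contributed by $P$ under a sparse representation, and $N_{T'}(b)$ is appended. Attaching $P$ to the existing cellular graph and updating the bidirectional pointers between vertices, edges, and faces is a constant number of elementary operations.

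The main obstacle is guaranteeing that a suitable $D$ is located in $O(1)$ time rather than after some linear search for an empty region of $\Sigma_g$. This is resolved precisely by the corner-coordinate dichotomy above, which always furnishes a usable $D$ inside any pre-chosen face from a constant amount of local data; the fact that $\alpha$ is edged is used here, since it lets us ignore $\alpha$ entirely once $D$ is placed in the interior of a single triangle. All remaining steps are bookkeeping and run in $O(1)$.
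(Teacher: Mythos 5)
Your overall strategy coincides with the paper's: pick one triangle $t$, use the corner coordinates of $\beta$ in $t$ (obtainable in $O(1)$ from $N_T(\beta)$) to find a spot in the interior of $t$ that misses $\beta$ and, since $\alpha$ is edged, also misses $\alpha$; excise a disk there and glue in a constant-size triangulated punctured torus carrying the stabilizing pair $a,b$. Incidentally, your case split on whether some corner coordinate vanishes is unnecessary: every triangle of the overlay graph contains exactly one junction region disjoint from the normal multicurve, so the center of $t$ always works.

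The genuine gap is in the bookkeeping after the excision. Once you remove $D$ from the interior of $t$ and glue $\partial P$ to $\partial D$, the face $t\setminus D$ is not a disk, so what you have is no longer a triangulation (nor even a cellular embedding), and $\beta$ is no longer normal with respect to it. You must retriangulate $t\setminus D$ — in the paper this is done by inserting a central triangle $t_1$ and six edges joining its vertices to those of $t$, cutting $t\setminus t_1$ into six triangles. Crucially, $\beta$'s elementary arcs in $t$ \emph{do} cross these six new subdividing edges, so your claim that the new edges all receive coordinate $0$ is false: only the edges interior to $P$ (and the three sides of $t_1$ itself) get $0$, while each connecting edge must be assigned the corresponding corner coordinate, e.g.\ $e^{(1)}_1(\beta)=e^{(2)}_1(\beta)=\tfrac12[f(\beta)+g(\beta)-e(\beta)]$ and symmetrically. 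Without these assignments the output $N_{T'}(\beta\cup\{b\})$ is not the normal coordinate vector of the stabilized diagram. The fix is immediate from the corner-coordinate formulas you already invoke to locate $D$, and it preserves the $O(1)$ running time, but as written the construction does not output a valid instance of the data structure.
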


\begin{proof}
    A similar algorithm for (uncompressed) intersection sequences as input is described in \cite{ennes2025hardnesscomputationquantuminvariants}; here we extend it to our data structure.
    Choose any face $t$ of $T$ and modify it by adding a triangle $t_1$ of sides $e_1,f_1,g_1$ to its center, together with 6 other edges $e^{(1)}_1,e^{(2)}_1,f^{(1)}_1,f^{(2)}_1,g^{(1)}_1,g^{(2)}_1$ connecting the vertices of $t_1$ to the vertices of $t$, see Figure \ref{fig: stabilization}.
    Define the triangulation $T_1$ as $T$ with $t_1$ deleted and update the normal coordinates of $\beta$ to be a vector in $\mathbb{N}^{|T|+9}$, where the new components are set to
\begin{equation*}
\begin{split}
    &e_1(\beta), f_1(\beta), g_1(\beta) \xleftarrow{} 0 \;\;\;e^{(1)}_1(\beta),e^{(2)}_1(\beta)\xleftarrow{}\frac{1}{2}[f(\beta)+g(\beta)-e(\beta)]\\ &f^{(1)}_1(\beta),f^{(2)}_1(\beta)\xleftarrow{}\frac{1}{2}[g(\beta)+e(\beta)-f(\beta)]\;\;\;g^{(1)}_1(\beta),g^{(2)}_1(\beta)\xleftarrow{}\frac{1}{2}[e(\beta)+f(\beta)-g(\beta)].
\end{split}
\end{equation*}
Note that $T\backslash t_1$ is homeomorphic to $\Sigma_g$ with one puncture. 

Now let $T_2$  be a fixed triangulation of the punctured torus with a triangular boundary component $t_2$, endowed with an edged curve $a$ and a normal curve $b$, both disjoint from $t_2$. 
Glue $T_1$ to $T_2$ by identifying $t_1$ with $t_2$. 
This forms a triangulation of $\Sigma_{g+1}$, with $\alpha\cup \{a\}$ and $\beta\cup \{b\}$ naturally giving a Heegaard diagram of a stabilization of $(\Sigma_g,\alpha,\beta)$.
\end{proof}

\begin{theorem}[\texttt{DO DESTABILIZATION}]\label{th: do destabilization}
    One can find a trivial stabilization pair of the diagram $(T,E_T(\alpha),N_T(\beta))$ in time $O(g|T|)$ and, if found, a destabilization of the diagram can be computed in constant time.
\end{theorem}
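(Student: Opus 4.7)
The plan is to characterize a trivial stabilization pair $(a,b) \in \alpha \times \beta$ by intersection counts only: since the components of $\alpha$ are pairwise disjoint and so are the components of $\beta$, the three defining conditions are equivalent to the conjunction $|a \cap \beta| = 1$ and $|b \cap \alpha| = 1$. First I would precompute the array $\mathcal{E}[e] = \sum_{c \in \beta} e(c) = e(\beta)$ for every edge $e \in E$ in time $O(g|T|)$ by iterating over the $g$ normal coordinate vectors of $N_T(\beta)$. For each component $a \in \alpha$, the quantity $|a \cap \beta| = \sum_{e \in E_T(a)} \mathcal{E}[e]$ is then computed in time $O(|E_T(a)|)$ by a single traversal of $E_T(a)$; since $\|E_T(\alpha)\| = O(|T|)$, the cumulative cost of this step is $O(|T|)$.

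For every component $a$ with $|a \cap \beta| = 1$, of which there are at most $g$, I would locate the unique edge $e \in E_T(a)$ with $\mathcal{E}[e] = 1$ in time $O(|E_T(a)|)$, then identify the unique $b \in \beta$ with $e(b) = 1$ by scanning the $g$ normal vectors of $N_T(\beta)$ in $O(g)$ time. Finally, using a Boolean marker array flagging the edges of $E_T(\alpha)$ (precomputable once in $O(|T|)$), the verification $|b \cap \alpha| = \sum_{e \in E_T(\alpha)} e(b) = 1$ takes $O(|T|)$ per candidate. With at most $g$ candidates, this verification contributes $O(g|T|)$ in total, matching the claimed overall bound.

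For the destabilization step, the key geometric observation is that a regular neighborhood of $a \cup b$ is a once-punctured torus in $\Sigma_g$, since $a$ and $b$ meet in a single transverse point and neither meets any other meridian of the opposite system. Destabilizing amounts to collapsing this torus summand, which I would implement on the data structure by removing the entry $E_T(a)$ from $E_T(\alpha)$, deleting the vector $N_T(b)$ from $N_T(\beta)$, and making a constant-size edit to the cellular graph at the intersection $a \cap b$.

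The main obstacle is fitting the last step into the constant-time budget, since literally cutting along $a$ and capping the resulting boundary with a disk would touch every edge of $E_T(a)$, which can be $\Omega(|T|)$. The resolution I would pursue is to allow the underlying surface to be represented by a general cellular embedding (already permitted by the data structure of Section \ref{sec: background}) and to encode the cut-and-cap by a single flag near $a \cap b$ that reinterprets the two faces incident to each edge of $E_T(a)$ as parts of a single face of $\Sigma_{g-1}$. This keeps the destabilization update $O(1)$ while preserving its topological content, deferring any explicit re-triangulation to subsequent routines that would actually require one.
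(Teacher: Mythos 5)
Your detection step is essentially the paper's: the paper likewise scans the normal coordinate vectors of $N_T(\beta)$ against the edge lists of $\alpha$ in $O(g\|E_T(\alpha)\|)=O(g|T|)$ time to find a component $a$ meeting $\beta$ in a single point. You are in fact slightly more thorough here: the paper's proof only states the condition $|a\cap\beta|=1$ and does not explicitly verify the symmetric condition $|b\cap\alpha|=1$, which is part of the definition of a trivial stabilization pair and which your second pass supplies within the same $O(g|T|)$ budget. On the destabilization itself the two arguments differ only in implementation: the paper cuts $T$ along $a$ and caps the resulting boundary circles with disks, explicitly charging $O(\|E_T(a)\|)=O(|T|)$ for this (so the ``constant time'' of the statement is not literally achieved by the paper's own proof either, though it is absorbed into the $O(g|T|)$ detection cost), whereas you attempt a genuinely $O(1)$ lazy update. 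Your update is topologically sound---after deleting $b$, no remaining curve of either system meets any edge of $E_T(a)$, since components of $\alpha$ do not share edges and every other $b'\in\beta$ is disjoint from $a$, so the cut-and-cap is a purely local modification along $a$---but as written it is under-specified: merely removing $a$ and $b$ from the curve systems does not change the genus of the underlying surface, so you would need to spell out exactly which cellular embedding of $\Sigma_{g-1}$ your flag encodes (duplicated copies of the edges of $E_T(a)$ each bounding a new disk face) for the output to be a valid marked triangulation usable by the subsequent routines.
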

\begin{proof}
    For each $a\in \alpha$, check the normal vectors of $N_T(\beta)$, in time $O(g\|E_T(\alpha)\|)$, for an $a$ that intersects only a single component of $\beta$ and only once. 
    Suppose that there is such an $a$ and that $a,b$ is the detected trivial stabilization pair. 
    In time $O(\|E_T(a)\|)=O(|T|)$, cut $T$ at $a$ and attach disks to the holes, therefore giving a triangulation of the surface of genus $g-1$.
\end{proof}

\begin{theorem}[\texttt{DETECT REDUCTION}]\label{th: detect reduction}
    There is an algorithm that runs in time $O(\poly(m,|T|))$ and detects whether a diagram is reducible. 
    Moreover, if the diagram is reducible, it returns either an isotopic pair of $a\in \alpha$ and $b\in \beta$, or a separating curve $s$.
\end{theorem}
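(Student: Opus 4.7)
The plan is to search for each of the two kinds of reducibility separately.

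For condition (1), the algorithm first converts $E_T(\alpha)$ into normal-coordinate form using Lemma \ref{lm: get twins}, and then applies \texttt{CHECK ISOTOPY} (Theorem \ref{th: check isotopy}) to each of the $g^2$ pairs $(a,b) \in \alpha \times \beta$. The first isotopic pair encountered is returned. Since \texttt{CHECK ISOTOPY} and Lemma \ref{lm: get twins} both run in $\poly(m,|T|)$ time, this phase is polynomial.

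For condition (2), the algorithm first puts $\alpha\cup\beta$ in efficient position using \texttt{GET EFFICIENT POSITION} (Theorem \ref{th: get efficient position}) and computes the street complex of $\alpha\cup\beta$ (Theorem \ref{th: erickson}), extracting the complementary pieces $P_1,\ldots,P_k$ of $\Sigma_g\setminus(\alpha\cup\beta)$ together with their genera and boundary structures (via Euler characteristic, as in the proof of Theorem \ref{th: check diagram}). It then builds the dual multigraph $G$ whose vertices are the pieces and whose edges are the arcs and circles of $\alpha\cup\beta$ shared between adjacent pieces, keeping track of which component of $\alpha\cup\beta$ each edge belongs to. The key observation is that any essential separating curve $s$ disjoint from $\alpha\cup\beta$ must lie in a single piece $P$, and splitting $P$ by $s$ must disconnect $G$ while leaving at least one component of $\alpha$ (equivalently, $\beta$) on each side, so as to realize a nontrivial connected sum.

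For each genus-zero piece $P$, every bipartition of its boundary components into two nonempty groups is realized by an essential separating curve in $P$. Hence it suffices to test, by a polynomial-time graph-theoretic procedure on $G$, whether the edges of $G$ incident to $P$ admit a bipartition that disconnects $G$ and satisfies the $\alpha$-coverage condition. For each piece of genus $\geq 1$, one must additionally consider essential curves that are non-separating in $P$; such a curve is separating in $\Sigma_g$ iff its class vanishes in $H_1(\Sigma_g;\mathbb{Z})$, which can be decided by a linear-algebra test against a presentation of $H_1(\Sigma_g)$ obtained from the cellular structure of $T$. When a valid $s$ is found, its normal coordinates with respect to $T$ are extracted by tracing it through the street complex via Theorem \ref{th: erickson} and returned. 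Otherwise, together with the negative outcome of condition (1), the algorithm concludes that the diagram is not reducible.

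The hard part is the search in the separating-curve case: naively iterating over all $2^{O(|T|)}$ bipartitions of the boundary components of each piece would be exponential. The main idea to keep the overall complexity polynomial in $g$, $m$, and $|T|$ is to recast the search as a connectivity question on the dual multigraph $G$, complemented by the homological test for the non-separating-in-$P$ curves that arise when a piece has positive genus.
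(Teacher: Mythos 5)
Your treatment of condition (1) is a legitimate alternative to the paper's: the paper instead puts the diagram in efficient position, observes that a component $b'\in\beta'$ can only be isotopic to some $a\in\alpha$ if $b'$ is disjoint from all of $E_T(\alpha)$, and then checks whether $a$ and $b'$ cobound a cylinder in the street complex $S(T,b')$; your route via Lemma \ref{lm: get twins} and \texttt{CHECK ISOTOPY} applied to all $g^2$ pairs is also polynomial and is fine.

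For condition (2), however, there is a genuine gap. The entire difficulty of this case is concentrated in the sentence ``it suffices to test, by a polynomial-time graph-theoretic procedure on $G$, whether the edges of $G$ incident to $P$ admit a bipartition that disconnects $G$ and satisfies the $\alpha$-coverage condition'' --- you name this as the hard part and then assert, rather than exhibit, the procedure. The space of bipartitions of $\partial P$ is exponential, and reducing the search to ``a connectivity question on $G$'' requires actual care (e.g., boundary circles of $P$ glued to $P$ itself, the essentiality of the resulting curve in $\Sigma_g$ rather than just in $P$, and the equivalence you assert between $\alpha$-coverage and $\beta$-coverage are all nontrivial points you do not address). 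As written, the proof of the central step is missing. Two further remarks: the genus-$\geq 1$ piece case you set up is vacuous, since $\alpha$ is a system, so $\Sigma_g\setminus\alpha$ is already planar and hence every piece of $\Sigma_g\setminus(\alpha\cup\beta)$ has genus zero; and the paper sidesteps your combinatorial search entirely by a much cruder criterion --- it builds a cellular embedding of $\Sigma_g\setminus(\alpha\cup\beta)$ by adding infinite weights in the dual graph of $S(T,\beta)$, checks whether the complementary punctured spheres are non-disks, and if so directly outputs an essential curve in the complement as an SLP using the techniques of Erickson et al.\ and Chambers et al. If you want to keep your finer analysis (which, to its credit, engages with the ``defines a connected sum of two Heegaard splittings'' clause more seriously than the paper does), you must actually supply and justify the polynomial-time bipartition search.
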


\begin{proof}
 We divide the algorithm into two parts, corresponding to the two hypotheses of reducibility of diagrams, namely the existence of an isotopic pair of meridians $a\in \alpha$ and $b\in \beta$, or of an essential separating curve that splits $\Sigma_g$ into subdiagrams. 
 For the first case, apply \cite{lackenby2024some}[Theorem 6.3] to find, in polynomial-time on the input, an isotopic diagram $(\alpha,\beta')$ in efficient position.
 This procedure increases the complexity of $\beta$ to $m'= m(1+2|T|)$
 We note that although this subroutine is polynomial-time on the input, the author of~\cite{lackenby2024some} does not provide an estimation of the polynomial degree, and any attempt on our part to give one would be out of the scope of this paper. 
 Because we made the diagram efficient, there can only be a component $a\in\alpha$ isotopic to some $b'\in \beta'$ if $e(b')=0$ for all $e\in E_T(\alpha)$.
 If that is the case, we trace $S(T,b')$. 
 Since $b'\cap a =\emptyset$, the curve $a$ is still fully in the edges of the street complex. 
 Nonetheless, $a$ is isotopic to $b'$ if and only if they cobound a cylinder \cite{farb2011primer}, a case that can be determined in time $O(|S(T,b')|)$. 
 Checking this for every pair of curves $a$ and $b'$ that do not intersect---there can be at most $g^2$ of these pairs---gives a test for the first hypothesis.

 \begin{figure}
    \centering
    \includegraphics[width=0.8\textwidth]{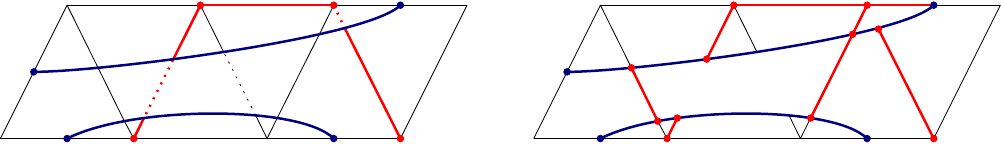}
     \caption{Diagram illustrating the proof of Theorem \ref{th: detect reduction}. Left: part of the overlay graph defined by the $\beta$ curves (blue) with respect to a triangulation (black) and the corresponding embeddings of a segment of the $\alpha$ curves (red). Redundant ports are shown in dashed. A cellular embedding of the surface $\Sigma''$ can be obtained by puncturing the diagram along both blue and red curves. Right: another cellular embedding for $\Sigma''$ in which we enlarge the holes punctured when cutting $\alpha$ by moving redundant ports in $E_T(\alpha)$ to both ends of the streets.}
  \label{fig: cutting}
\end{figure}

 If the above test returns no clear reduction pair, we look for a separating curve, $s$, disjoint from $\alpha$ and $\beta$, which exists if and only if $s$ is non-contractable in $\Sigma'' = \Sigma_g\backslash(\alpha\cup \beta)$. Because $\Sigma'=\Sigma_g\backslash \beta$ is a $2g$ puncture sphere, $\Sigma''$ is a union of (potentially one) punctured spheres, and it embeds a non-contractable curve if and only if one of its components is not a disk.
 This means that we only need to compute a cellular embedding of $\Sigma''$ and check its topological type. 
 We start computing a triangulation for $\Sigma'$ as in the proof of Theorem \ref{th: do a disk slide} by tracing $S(T,\beta)$ and disconnecting from the gluing rules the edges of $E_{S(T,\beta)}(\beta)$.
 We then want to further cut $\Sigma'$ along $\alpha$.
 But, because $\alpha$ is edged with respect to $T$, each segment of $E_T(\alpha)$ will be a port of the overlay graph of $\beta$ (with respect to $T$).
 Each such a port will either be non-redundant---in which case it will still be an edge of $S(T,\beta)$)---or it will be redundant---in which case it will be in the crossing sequence of some non-degenerate street and will be deleted to form $S(T,\beta)$ (left side of Figure \ref{fig: cutting}). 
 Therefore, for each street in $E_{S(T,\beta)}(\beta)$ (degenerate or not), we can use Part (c) of Lemma \ref{lm: slp} and Part (b) of Theorem \ref{th: erickson} to determine, in time $O(\|E_T(\alpha)\|\times m|T|)$, if there is an occurrence of an edge in $E_T(\alpha)$ in a street's crossing sequence and delete from the complex $S(T,\beta)$ (the interior of) any such a street. Because the streets are disks in $\Sigma'$, bounded by one or two segments of $\beta$, we note that, topologically, deleting the full interior of the streets that are transverse to some edge in $E_T(\alpha)$ is equivalent to puncturing a hole corresponding to a redundant port of the overlay graph by simply ``enlarging" the hole (right side of Figure \ref{fig: cutting}). This means that we have a cellular embedding of $\Sigma''$ which can be queried, in polynomial-time, for its topological type. Assuming that $\Sigma''$ has a component that is not a disk: we can use the techniques of \cite{erickson2002optimally} (refer to Section 2.3 of \cite{chambers2006splitting} for details) to find, in time $O(|T'|)$ a non-contractable curve $s$ in $\Sigma''$ expressed as an SLP of complexity $O(m|T|+|T|)=O(m|T|)$.
 We finally compute $S(T,s)$, in time $O(m|T|^2)$, to check whether a diagram is defined on each component of $\Sigma_g\backslash s$. 
 For such, we need only to find 1. the number of $\alpha$ and $\beta$ components on each side of $s$; and, 2. whether $s$ is not isotopic to an $\alpha$ or $\beta$ curve by looking for bounded cylinders, as described in the first case. 
\end{proof}

\begin{theorem}[\texttt{DO A DEHN TWIST}]\label{th: do a dehn twist}
    Given a normal curve $s$ in $T$ of complexity $\|N_T(s)\|=m'$, one can find, in time $O(((m+m')|T|)^{10})$, the diagram $(T,E_T(\alpha),N_T(\tau_s(\beta))$.
\end{theorem}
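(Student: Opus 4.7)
The plan is to reduce this to the same pipeline used in Theorem~\ref{th: word to diagram}: first produce an intersection sequence for $\tau_s(\beta)$ as an (a priori only standard) curve, then normalize by cyclic reduction, then convert the cyclically reduced SLPs back to normal coordinates. The only genuinely new ingredient is the action of a single Dehn twist on an intersection sequence when both $s$ and $\beta$ are presented as compressed normal curves, which is exactly what Schaefer--\v{S}tefankovi\v{c} describe in \cite{schaefer2008computing, Stefankovic}.

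First I would apply part (c) of Theorem~\ref{th: erickson} to $s$ and to each component of $\beta$, obtaining SLPs for $I_T(s)$ and $I_T(\beta)$ of total complexity $O(m+m')$ in time $O(m+m')$. Next, I would perform the Dehn twist at the level of these SLPs: the effect of $\tau_s$ on the standard intersection sequence of $\beta$ is to replace each transverse crossing of $\beta$ with $s$ by a traversal of a parallel copy of $s$ (oriented according to the index of the crossing). This substitution is local in the symbol alphabet $E^{\pm}$ and, following the routines of \cite{schaefer2008computing, Stefankovic}, can be carried out on SLPs in time linear in the input complexity, producing SLPs $I_T(\tau_s(\beta))$ of complexity $O((m+m')|T|)$ representing a standard---but in general not normal---curve isotopic to $\tau_s(\beta)$.

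To restore normality I would invoke, exactly as in Theorem~\ref{th: word to diagram}, Miyazaki's deterministic cyclic reduction of \cite{miyazaki1997improved}, which in time $O(((m+m')|T|)^{10})$ outputs cyclically reduced SLPs of complexity $O(((m+m')|T|)^{2})$ for each component; alternatively a randomized variant runs in $O(((m+m')|T|)^{3})$. Finally, part (c) of Lemma~\ref{lm: slp} gives the edge-occurrence counts $e(\tau_s(\beta))$ for every $e \in E$, and hence the normal coordinate vectors $N_T(\tau_s(\beta))$, in time $O((m+m')|T|)^{2}$. The overall complexity is dominated by the normalization step, yielding the claimed $O(((m+m')|T|)^{10})$ bound. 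The main obstacle is the normalization of an SLP-compressed word, but this is precisely the subroutine that was already relied upon in Theorem~\ref{th: word to diagram} and that is guaranteed to work in the stated time by \cite{miyazaki1997improved}; the Dehn-twist substitution itself is linear-time and does not affect the bottleneck.
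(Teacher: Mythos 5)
Your proposal follows essentially the same route as the paper: defer the twist itself to the Schaefer--\v{S}tefankovi\v{c} routines (which run in linear time but output only a standard curve), then pay for cyclic reduction as in Theorem~\ref{th: word to diagram}, with the deterministic $O(((m+m')|T|)^{10})$ normalization as the bottleneck, and finish with Lemma~\ref{lm: slp}(c). The only caveat is that your description of the twist as a ``local'' symbol substitution is a simplification---since $s$ is normal rather than edged, its crossings with $\beta$ are not directly visible in the alphabet $E^{\pm}$, which is precisely why the cited compressed-curve algorithms are needed---but you correctly attribute that step to them, so the argument stands.
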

\begin{proof}
    This algorithm is described by \cite{schaefer2008computing, Stefankovic}. 
    If one requires only the intersection sequences $I_T(\tau_s(\beta))$, the algorithm can be executed in linear time on the inputs, but the output might just be standard with respect to the triangulation. 
    Therefore, as in the proof of Theorem \ref{th: word to diagram}, we are forced to use the subroutines described by the same authors to cyclic reduce---either in time $O(((m+m')|T|)^{10})$ with a deterministic algorithm or in time $O(((m+m')|T|)^{3})$ with a randomized one---the multicurve given by $I_T(\tau_s(\beta))$.
\end{proof}

\begin{theorem}[\texttt{GET} \texorpdfstring{$\pi_1(M)$}{pi1(M)}]\label{th: get pi1}
    A presentation of $\pi_1(M)$ can be computed in time $O(m|T|)$.
\end{theorem}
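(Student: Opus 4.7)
The plan is to reduce the theorem to Theorem~\ref{th: erickson}(c), which already hands us SLPs for the intersection sequences $I_T(\beta)$ of total complexity $O(m)$ in time $O(m)$: since each relation $r_i$ is obtained from traversing the component $b_i \in \beta$ and reading off an $x_j^{\pm 1}$ whenever $b_i$ crosses an edge of $a_j \in \alpha$ and ignoring every other edge crossing, I just need to turn each $I_T(b_i)$ into an SLP over the new alphabet $\{x_1^{\pm 1}, \dots, x_g^{\pm 1}\}$ by a local relabel-and-filter pass.

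First I would orient each $a_j \in \alpha$ using the implicit direction of $E_T(a_j)$, and build in time $O(\|E_T(\alpha)\|) = O(|T|)$ a lookup table $\mu : E \to \{x_1^{\pm 1}, \dots, x_g^{\pm 1}\} \cup \{\varepsilon\}$ sending an edge $e \in E$ to $x_j^{\sigma}$, where $\sigma = +1$ if the intrinsic orientation of $e$ in $E^\pm$ agrees with its orientation in $E_T(a_j)$ and $\sigma = -1$ otherwise, and to $\varepsilon$ if $e$ is not on any $a_j$. By the definition of algebraic intersection number, a signed crossing $e^{\delta}$ of $b_i$ with $e \in E_T(a_j)$ contributes exactly $\mu(e)^{\delta}$ to the Van Kampen word for $r_i$. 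Next, apply Theorem~\ref{th: erickson}(c) to obtain the SLPs $\{\langle y^{(i)}_1, \dots, y^{(i)}_{\ell_i} \rangle\}_{b_i \in \beta}$ for $I_T(\beta)$. For each SLP, rewrite every simple assignment $y^{(i)}_k = e^{\pm 1}$ as $y^{(i)}_k = \mu(e)^{\pm 1}$ and leave proper assignments syntactically untouched; this is a single scan of size $O(m)$.

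The only subtlety is that simple assignments with $\mu(e) = \varepsilon$ do not fit the SLP grammar defined in Section~\ref{sec: background}, so I would do a cleanup pass. Compute bottom-up, in time $O(m)$, the set $\mathcal{E}$ of variables evaluating to the empty word ($y^{(i)}_k \in \mathcal{E}$ iff either its simple assignment is $\varepsilon$ or all variables in its proper assignment lie in $\mathcal{E}$), then scan each proper assignment $y^{(i)}_k = y^{(i)}_{j_1}{}^{\pm 1} \cdots y^{(i)}_{j_p}{}^{\pm 1}$ and remove every factor whose base variable lies in $\mathcal{E}$; if the resulting assignment is empty, mark $y^{(i)}_k$ as empty as well and propagate. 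Each assignment only shrinks, so total work is $O(m)$ and the final SLPs have complexity $O(m)$, well within the claimed $O(m|T|)$ bound. The main obstacle is this cleanup step, which must guarantee that the rewriting produces a grammar-valid SLP without duplicating work on variables whose expansion was already recognised as empty; handling empty propagation with a single shared table avoids that blow-up and keeps the bound linear.
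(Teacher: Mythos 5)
Your proposal is correct and follows essentially the same route as the paper: obtain the SLPs $I_T(\beta)$ via Theorem~\ref{th: erickson}(c), then delete/relabel terminal symbols so that edges of $E_T(\alpha)$ become generators and all other edges disappear. Your extra care with the orientation table $\mu$ and the empty-variable cleanup pass just fills in details the paper's one-line "delete and substitute" argument leaves implicit, and your $O(m)$ bound sits comfortably inside the claimed $O(m|T|)$.
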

\begin{proof}
    Given $I_T(\beta)=\{I_T(b)\}_{b\in\beta}$ delete, in linear time on the complexity of the SLPs, all references to the edges $E-E_T(\alpha)$ and substitute each edge of $E_T(\alpha)$ by the symbol that represents its associated component $a\in \alpha$, here seen as a generator $\pi_1(M)$.
    This gives $g$ distinct SLPs $r_i = [x_1^i=\text{EXPR}_1^i,\dots, x_{\ell_i}^i=\text{EXPR}_{\ell_i}^i]$ whose complexities add up to $m$ and describe the relations of the presentation of $\pi_1(M)$.
    A presentation of $\pi_1(M)$ can then be defined with generators $a_1,\dots, a_g, x_1^1,\dots, x_{\ell_i-1}^1,\dots, x_{\ell_{g}-1}^g$ and relators $(x_j^i)^{-1}\cdot \text{EXPR}_j^i$ (this corresponds to repeated applications of a Tietze move~\cite{johnson1997presentations} to the presentation of $\pi_1(M)$ computed using the method described at the end of Section~\ref{sec: background topology}).
    The output is a balanced presentation of $\pi_1(M)$ with $\|I_T(\beta)\|$-many generators and relators. 
    The total time is dictated by Theorem \ref{th: erickson}.  
\end{proof}

\begin{remark}\label{rm: co-NP}
    Because the outputted presentation of $\pi_1(M)$ in Theorem \ref{th: get pi1} has linear size on the input, most complexity results about \emph{uncompressed presentations} of 3-manifolds (such as triangulations) that appeal to the fundamental groups still hold in our \emph{compressed} data structure. 
    This is the case, for example, of Zentner's proof that sphere recognition lies in co-$\texttt{NP}$~\cite{Zentner_2018} (assuming the generalized Riemann hypothesis).
    Similarly, the usual bounds on the volume \cite{cooper1999volume} and diameter \cite{white2001diameter} of hyperbolic 3-manifolds can be naturally translated to our notion of complexity.
\end{remark}

\begin{theorem}[\texttt{GET }\texorpdfstring{$H_1(M)$}{H1(M)}]\label{th: get h1}
    The Smith normal form of the presentation matrix of $H_1(M)$ can be found in time $\widetilde{O}(m^2g^3+m|T|)$, where $\widetilde{O}$ indicates some hidden polylogarithmic factors.
\end{theorem}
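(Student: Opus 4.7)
The plan is to build the $g \times g$ presentation matrix $K = [k_i^j]$, where $k_i^j = \hat{i}(b_i, a_j)$ as described at the end of Section \ref{sec: background topology}, and then reduce $K$ to Smith normal form.

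First, apply part (c) of Theorem \ref{th: erickson} to compute SLPs $I_T(\beta) = \{I_T(b_i)\}_{b_i \in \beta}$ of total complexity $O(m)$ in time $O(m)$. Recall that the alphabet is $E^{\pm} = E \cup E^{-1}$, and each occurrence of $e$ (respectively $e^{-1}$) in $I_T(b_i)$ records a positively (respectively negatively) oriented crossing of $b_i$ with the edge $e$. For every $b_i$ and every $e \in E$, invoke Lemma \ref{lm: slp}(c) to count, in time $O(\|I_T(b_i)\|)$, the occurrences of $e$ and of $e^{-1}$; their difference is the signed count $e(b_i) - e^{-1}(b_i)$. Summed over all $b_i$ and all $|E| = O(|T|)$ edges, this phase costs $O(m|T|)$.

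Second, assemble $K$ by traversing, for each pair $(i,j)$, the ordered edge list $E_T(a_j)$ --- which records both the edges visited and the direction in which $a_j$ traverses each of them --- and summing the signed counts with signs dictated by $a_j$'s orientation along $e$. Each such computation takes $O(\|E_T(a_j)\|)$ time, so the full assembly of $K$ costs $O(g \cdot \|E_T(\alpha)\|) = O(g|T|)$, which is absorbed by the previous $O(m|T|)$.

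Finally, compute the Smith normal form of $K$. Each entry satisfies $|k_i^j| \leq \sum_{e \in E_T(a_j)} e(b_i) \leq |T| \cdot 2^{\|N_T(b_i)\|} \leq |T| \cdot 2^{O(m)}$, so has bit length $O(m)$. A near-optimal integer SNF algorithm (for instance Storjohann's deterministic algorithm) computes the Smith form of a $g \times g$ matrix with entries of bit length at most $\beta$ in $\widetilde{O}(g^3 \beta^2)$ time, which here amounts to $\widetilde{O}(g^3 m^2)$. Combined with the $O(m|T|)$ cost of constructing $K$, this yields the announced total $\widetilde{O}(m^2 g^3 + m|T|)$. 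The main obstacle is controlling the bit blow-up during elimination: the entries of $K$ are already exponentially large in $m$, and naive row/column reduction would cause further growth beyond the claimed bound; Storjohann-style algorithms bypass this by working modulo the determinant, which is what makes the polylogarithmic factors hidden in $\widetilde{O}$ uniform in $g$ and $m$.
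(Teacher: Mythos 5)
Your proposal is correct and follows essentially the same route as the paper: compute the SLPs $I_T(\beta)$ via Theorem \ref{th: erickson}(c), use Lemma \ref{lm: slp}(c) to extract the signed edge counts giving $\hat{i}(b_i,a_j)=\sum_{e\in E_T(a_j)} e(I_T(b_i))-e^{-1}(I_T(b_i))$ in total time $O(m|T|)$, and then reduce the resulting $g\times g$ matrix with $O(m)$-bit entries to Smith normal form in $\widetilde{O}(g^3m^2)$. The only cosmetic difference is the choice of SNF subroutine (you cite Storjohann, the paper uses Kannan--Bachem), and your explicit bound on the entries' bit length is a welcome detail the paper leaves implicit.
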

\begin{proof}
    We first compute the presentation matrix $K$ of the homology group whose coefficients are algebraic intersection numbers between the diagram's curves. 
    Because we assume that each $a_j$ is a list of edges $E_T(a_j)$, given an SLP $I_T(b_i)$, 
\begin{equation*}
    \hat{i}(b_i,a_j)=\sum_{e\in E_T(a_j)} e(I_T(b_i))-e^{-1}(I_T(b_i)),
\end{equation*}
    which, by Lemma \ref{lm: slp}, can be computed in time $O(\|I_T(b_i)\|\times \|E_T(a_j)\|)$.
    Therefore, referring to Theorem \ref{th: erickson} to compute $I_T(\beta)$, the total computational time of this procedure is $O\bigg(m|T|+\sum_{i=1}^g\sum_{j=1}^g\|I_T(b_i)\| \|E_T(a_j)\|\bigg)=O(m|T|+m|T|)$. 
    Finally, we reduce the matrix $K$ to its Smith normal form in time $\widetilde{O}(g^3m^2)$ by the Kannan-Bachem algorithm~\cite{sergeraert2024kannanbachemalgorithm}.
\end{proof}

\section{Examples and experiments}\label{sec: examples and experiments}
Given a Heegaard word presentation $(\Sigma_g,\phi)$ of a closed 3-manifold $M$, there exist algorithms to compute, in linear time on the length of $\phi$, a triangulation of $M$ with as many tetrahedra~\cite{alagic2011quantum, alagic2014quantum, brinkmann2001computing,koenig2010quantum}.
This implies, for example, that the first homology group of $M$, $H_1(M)$, can be computed in polynomial-time on $|\phi|$.
These algorithms should be contrasted with Theorem \ref{th: word to diagram}, which says that SLPs of the intersection sequence $I_T(\beta= \phi(\alpha))$ can be computed in linear time on the complexity of $\phi$ \emph{measured in power-notation form}.
Because the fundamental group of $M$ depends only on $I_T(\beta)$, algebraic information on splittings with significantly complicated gluing maps can be retrieved by Theorems \ref{th: get h1} and \ref{th: get pi1}.

Here, we explore the gains from the compressed data structure by comparing an early implementation of our methods with \texttt{SnapPy}'s \cite{SnapPy} \texttt{Twister} \cite{Twister} module, commonly used to compute triangulations from Heegaard words.
Since experiments are limited to Heegard words as input, there is no need to use proper normal coordinates or even triangulations of $\Sigma_g$; instead, we may compute the SLPs of the induced diagram with respect to general cellular embeddings, therefore avoiding the high cost of cyclic reductions. 
This guarantees that, for the problems analyzed, our theoretical complexity is at least equal to \texttt{Twister}'s. 
Even without these simplifying assumptions, we are still exponentially faster in the worst cases, as we show in the following section about lens spaces.
Similar computational gains for genus 2 splittings are shown in Section \ref{sec: genus 2}, where some statistical properties of the homology groups of random splittings~\cite{rivin2014statistics} are explored.

\subsection{Genus 1}\label{sec: genus 1}
\begin{figure*}[t]
    \subfloat[]{%
        \includegraphics[width=.48\linewidth]{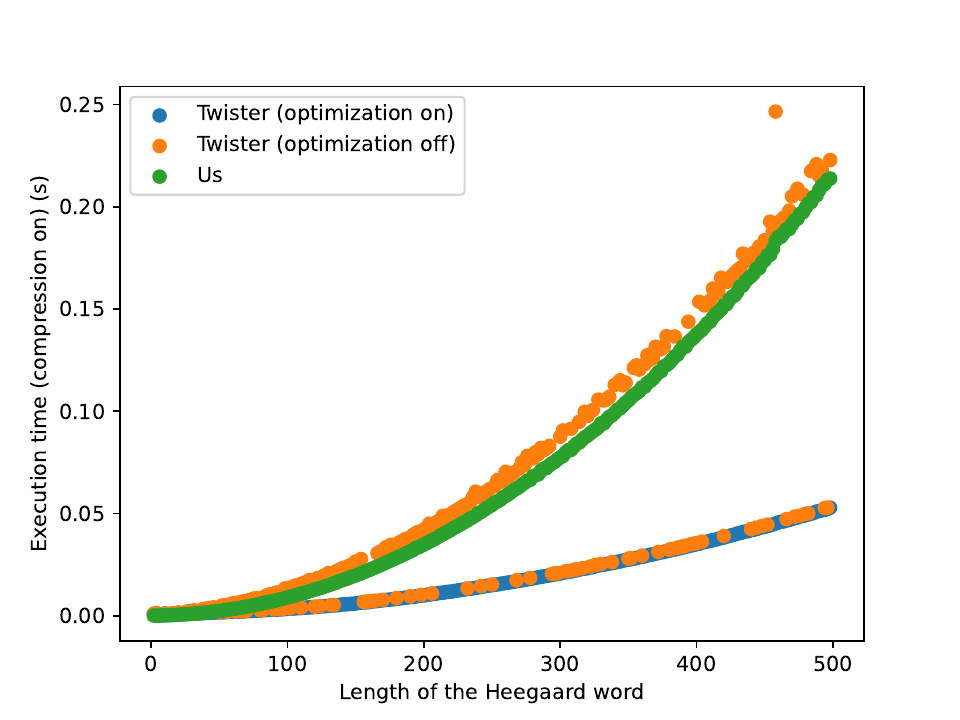}%
        \label{fig: robustness g 1}%
    }\hfill
    \subfloat[]{%
        \includegraphics[width=.48\linewidth]{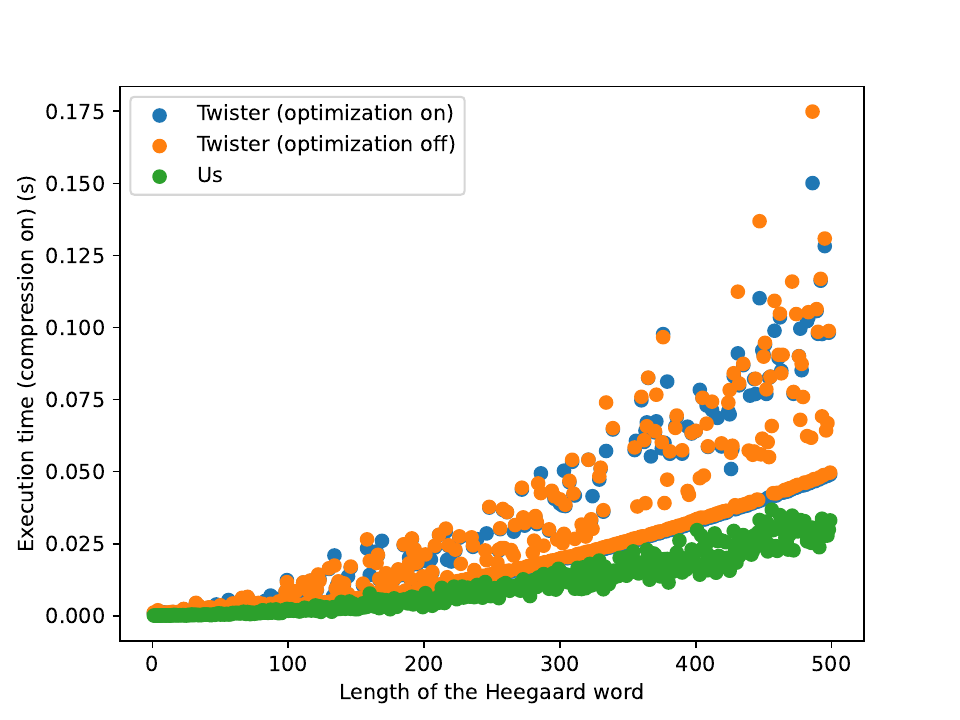}%
        \label{fig: time fibonacii g 1}%
    }\\
    \subfloat[]{%
        \includegraphics[width=.48\linewidth]{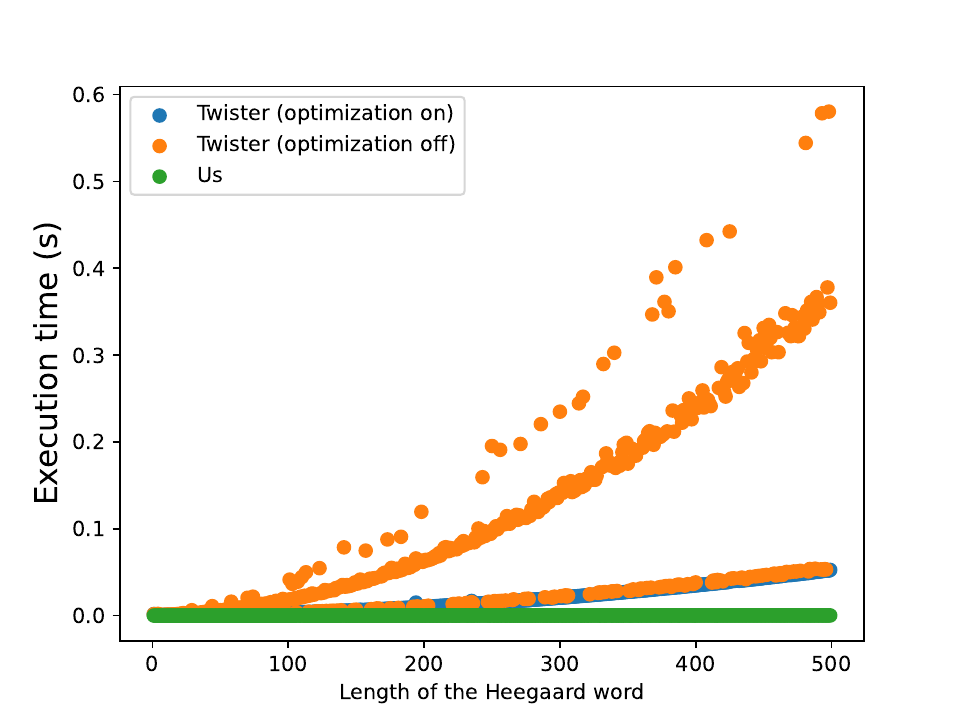}%
        \label{fig: time random g 1}%
    }\hfill
    \subfloat[]{%
        \includegraphics[width=.48\linewidth]{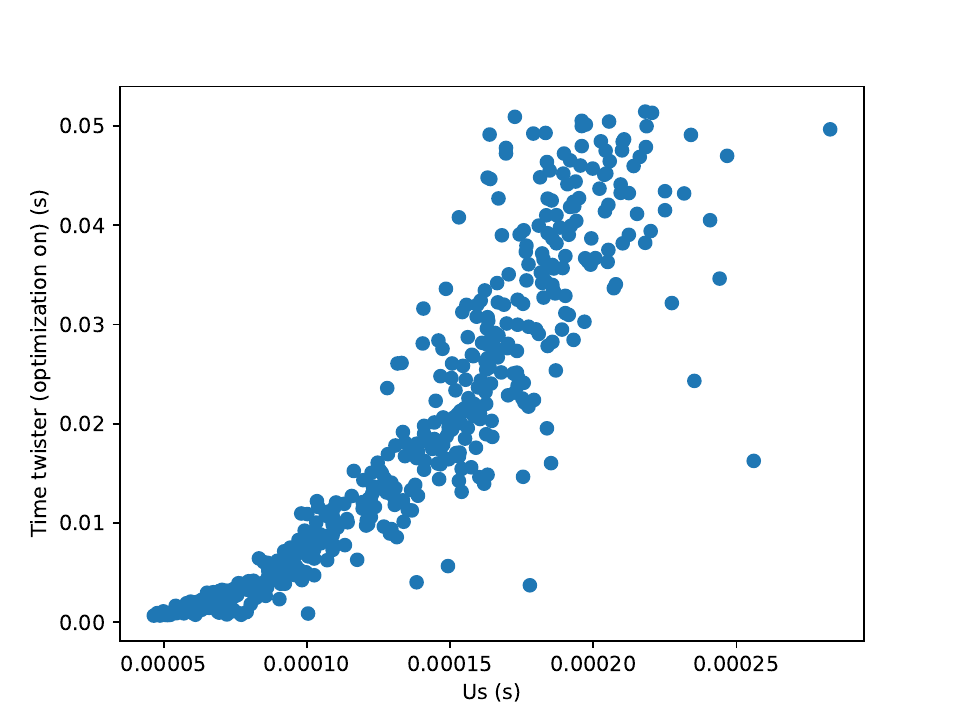}%
        \label{fig: time exponential g 1}%
    }
    \caption{Experiments with computing the homology group $H_1$ of Heegaard splittings over the torus. Averages were computed for 30 repetitions.}
    \label{fig: genus 1}
\end{figure*}

Heegaard splittings over the genus 1 surface---i.e., the torus---form the well-known family of \emph{lens spaces}. 
Every lens space is uniquely given by a pair of integers $(p,q)$, which, for a Heegaard splitting $(\Sigma_g,a, b)$, represents the algebraic intersection numbers of $b$ with the usual longitude and meridian of the torus, respectively (there is some ambiguity in the values of $p,q$, see~\cite{saveliev2011lectures}).  
A Dehn twist about the meridian $m$ of the torus maps $(p,q)$ to $(p + q,q)$, while about the longitude $\ell$ maps to $(p,q-p)$.
This gives a representation of $\Mod(\Sigma_1)$ in $\Z^2$ from which the homology of the Heegaard splitting $(\Sigma_g,\phi)$ can be exactly computed by $|\phi|$ matrix multiplications and one query on the final value of $q$.
We used this approach to benchmark the accuracy of our method, which returned the expected $H_1$ for input words $\phi$ of length up to 1000, the maximum size we tested.

Figure \ref{fig: genus 1} compares the average computational time needed to find $H_1(\Sigma_1,\phi)$ taken by our method and by $\texttt{Twister}$ for four different families of $\phi$ indexed by an integer $n$: (a) words of form $(\tau_m\tau_\ell^{-1})^n$; (b) uniformly sampled words of length $n$; (c) words of form $\tau_\ell^n$; and (d) words of form $\tau_\ell^n$, but this time, given in power-notation. 
Averages were taken over 30 repetitions.
In all cases, we compared our times with \texttt{Twister}'s, with and without optimization (where, by optimization, the authors mean the greedy folding of tetrahedra; refer to their documentation for details\footnote{Available at \url{https://snappy.computop.org}.}).
Although optimization drastically reduces computational time, it seems to increase numerical error.
In every instance, except for (d), we first transformed the words into a power-notation form before using them to compute $I_T(b)$; in total, this can be done in linear time on $|\phi|$.

\begin{remark}\label{rm: numerical error}
    A direct application of the representation of $\Mod(\Sigma_1)$ implies that, for the family of case (a), $H_1^n=\Z/f_{2n}$ where $f_n$ is the $n$-th Fibonacci number (starting at 0).
    We used this relation to benchmark the accuracy of the methods for different values of $n$. 
    While our technique always agrees with the expected value for up to $n=500$, \texttt{Twister} (with optimization both on and off) starts to deviate for values of $n\geq 25$.
    Interestingly, we did not observe divergences for higher genii splittings. 
\end{remark}

Case (a) is where our advantages compared to \texttt{Twister} are the worst.
This is because the words $(\tau_m\tau_\ell^{-1})^n$ do not admit a non-trivial power-notation form, which is mainly responsible for our gains in the other cases.
Nevertheless, it is worth highlighting that, while \texttt{Twister} is partially implemented in $\texttt{C++}$, our code is fully \texttt{python}.
Case (b) represents the average case of computing the homology of lens spaces given a gluing map $\phi$. 
We notice that we are now consistently faster for all values of $n$ tested.
This happens exactly as we explore redundancies by first putting the word in power-notation form, with significant gains every time the same Dehn twist occurs in sequence (since $\Mod(\Sigma_g)$ has two generators, this will be the case quite often).
The gains due to repetitions are further highlighted in (c): while \texttt{Twister} is bound to take $O(n^c)$, for some $c>1$, to compute the homologies, our approach takes time $O(n+\lg n)$, where the linear dependence comes solely from compressing the inputs to power-notation. 
In fact, if we assume the words already in power-notation, we are exponentially faster than \texttt{Twister}, as shown in case (d).

We can explore our structure's usage of power-notation to compute invariants of families of 3-manifolds that would necessarily require exponentially many tetrahedra to be represented.
The minimum number of tetrahedra in a \emph{layered triangulation} of a lens space $L(p,q)$ is $\sum_i(a_i)-1$, where $(a_0,\dots,a_k)$ is the continued fraction expansion of $q/p$ \cite{jaco2006layered}. 
Moreover, \cite{jaco2009minimal, jaco2011coverings} discovered families of lens spaces---including the spaces defined by $p=2n, q=1$ and $p=4n, q=2n-1, n\geq2$---for which layered triangulations are minimal among \emph{all} triangulations.
So, for example, while the family $L(2^n,1)$ necessarily requires exponentially many tetrahedra to be represented, it can be encoded in linear space in our data structure.
It seems natural to expect that there are many other families of manifolds, with diagrams exponentially smaller than the minimum number of tetrahedra, for which our techniques can provide real computational gains compared to any existing method.

\subsection{Genus 2}\label{sec: genus 2}
For the case of Heegaard splittings over the double torus, there is no easy version of the representation trick to compute the homology group, which means that we lack effective accuracy benchmarks.
However, this time, we can consider a \emph{random Heegaard splitting model} \cite{aschenbrenner20153}.
Formally, suppose we uniformly draw words from the radius-$n$ ball $B(n)=\{\phi \in \Mod(\Sigma_g)||\phi| = n\}$ for a fixed genus $g\geq 2$. 
We say that the class of 3-manifolds is \textit{asymptotically} $\mathcal{P}$, where $\mathcal{P}$ is a property of 3-manifolds (e.g., hyperbolic, Haken, etc.), if, for all $g\geq 2$, the probability that a manifold with splitting $\phi \sim \text{Uniform}(B(n))$ has the property $\mathcal{P}$ tends to 1 as $n\to\infty$.
While it has been shown, for example, that manifolds are asymptotically hyperbolic \cite{lubotzky2016random, https://doi.org/10.1112/jtopol/jtq031}, rational homology spheres with torsion \cite{dunfield2006finite}, and have particular features associated with some of their quantum invariants \cite{dunfield2011quantum}, other asymptotic properties, such as non-Hakeness and bounds on the hyperbolic volume \cite{rivin2014statistics}, remain only conjectured. 

\begin{figure}
\begin{subfigure}{.475\textwidth}
  \centering
  \includegraphics[width=.85\linewidth]{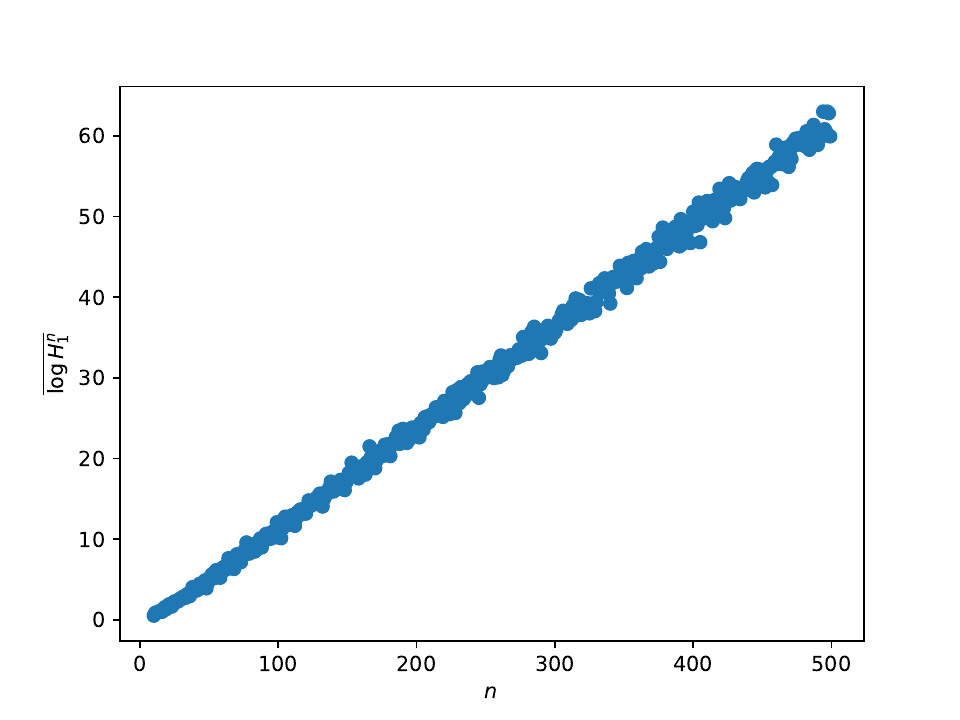}
  \caption{The estimator $\overline{\log H_1^n}$ versus $n$, computed using our method.}
  \label{fig: order genus 2} 
\end{subfigure}%
\hspace{1.5em}
\begin{subfigure}{.475\textwidth}
  \centering
  \includegraphics[width=.85\linewidth]{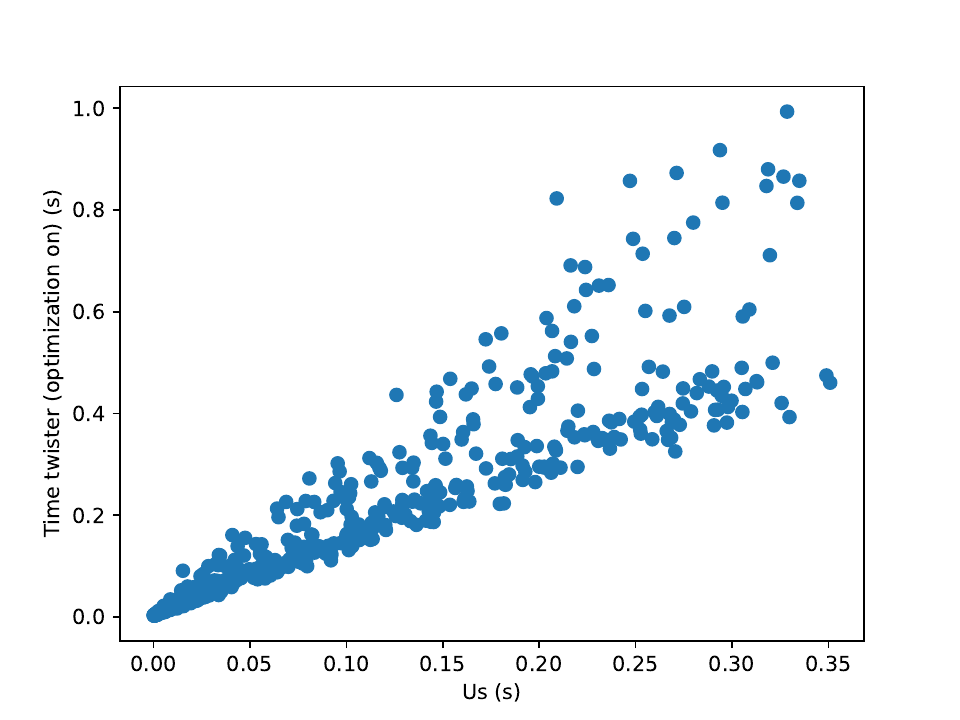}
  \caption{\texttt{Twister}'s (with optimization) average computational time of $\overline{\log H_1^n}$ versus our method's.}
  \label{fig: time genus 2}
\end{subfigure}
\caption{Experiments with computing the estimator $\overline{\log H_1^n}$ for Heegaard splittings over the double torus. Averages were taken over 30 repetitions.}
\label{fig:fig}
\end{figure}

Interestingly, although the random model is naturally suitable for empirical study, to our knowledge, there is a notable lack of works that profit from this approach, with one prominent exception~\cite{rivin2014statistics}. 
In this paper, Rivin shows that the variable $\overline{\log H_1^n}=\frac{1}{n}\sum_{i=1}^N \log|H_1((\Sigma_g,\phi_n^i(\alpha)))|$ where $\{\phi_n^i\}_{i=1}^N\overset{\mathrm{iid}}{\sim}\text{Uniform}(B(n))$ follows a Central Limit Theorem
\begin{equation}\label{eq: law genus 2}
    \frac{1}{\sqrt{N}}(\overline{\log H_1^n}-n\lambda)\to \mathcal{N}(0,\sigma^2)
\end{equation}
for some positive constants $\sigma$ and $\lambda$, as $N\to \infty$ (restricted to cases where $\log|H_1((\Sigma_g,\phi_n^i(\alpha)))|$ is finite).
In particular, he uses \texttt{Twister} to verify this relation for the case of $g=2$.
In Figure \ref{fig: order genus 2}, we reproduce the analysis and estimate $\lambda$ as 0.13. 
This time, we are about 2.5 times faster than \texttt{Twister} in the average case (\ref{fig: time genus 2}).
That both methods converge to the expected regression line is a strong indication of the correctness of the implementations.

\begin{figure}[h]
    \centering
    \begin{subfigure}[b]{0.45\textwidth}
        \centering
        \includegraphics[width=\textwidth]{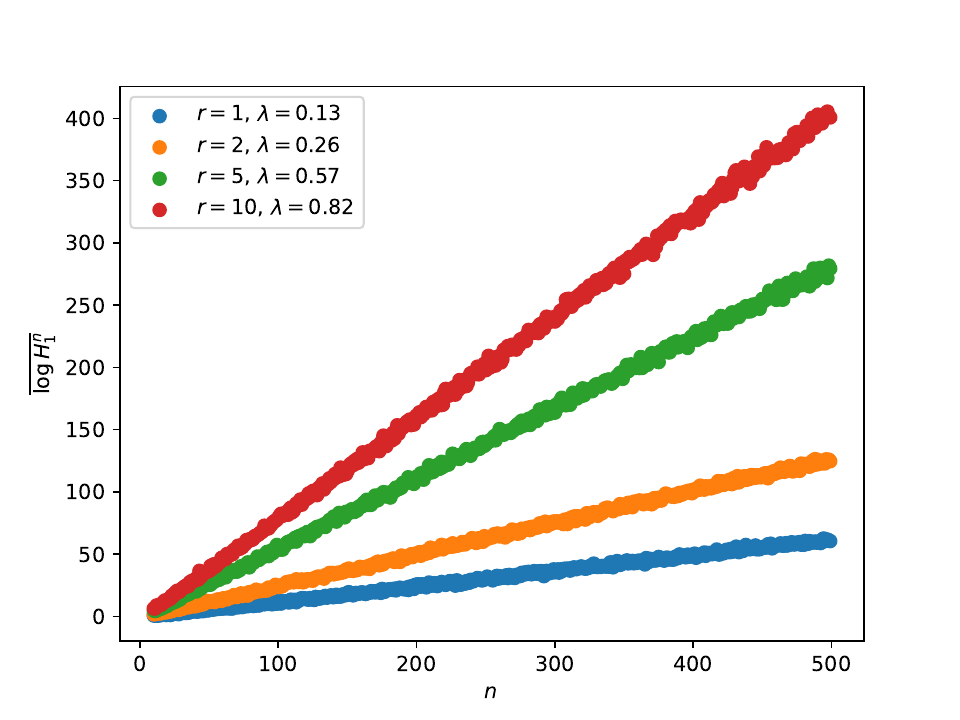}
        \caption{The estimator $\overline{\log H_1^n}$ versus $n$ where the words sampled are $\phi_{n,r}$.}
        \label{fig: different r}
    \end{subfigure}
    \hfill
    \begin{subfigure}[b]{0.45\textwidth}
        \centering
        \includegraphics[width=\textwidth]{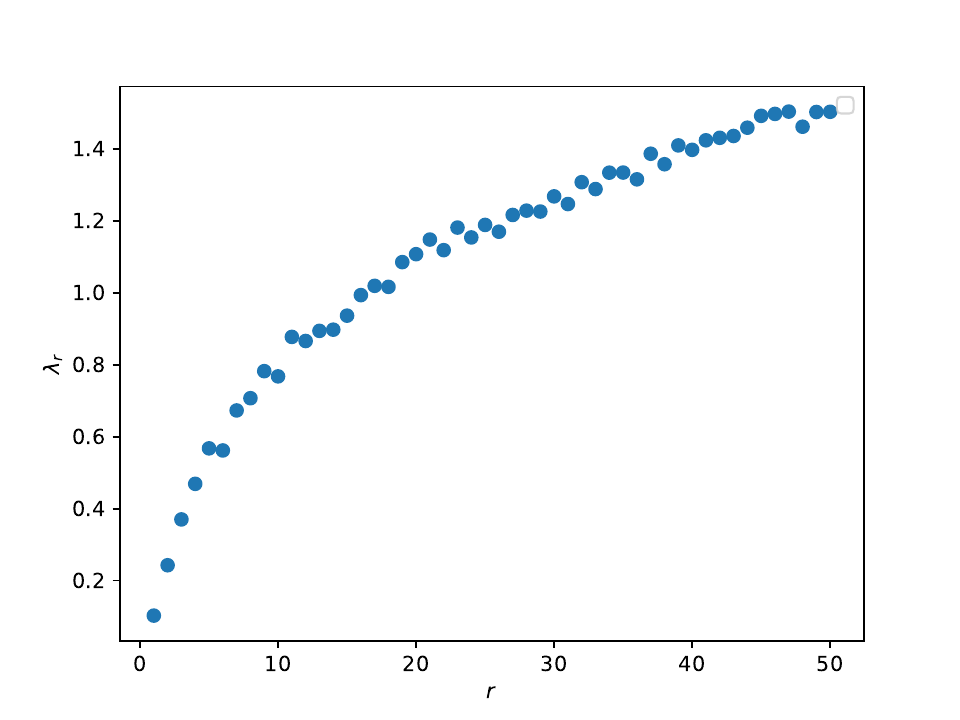}
        \caption{The estimated coefficients $\lambda$ as a function of $r$.}
        \label{fig: lambdas as a function of r}
    \end{subfigure}

    \vspace{1em} 

    \begin{subfigure}[b]{0.45\textwidth}
        \centering
        \includegraphics[width=\textwidth]{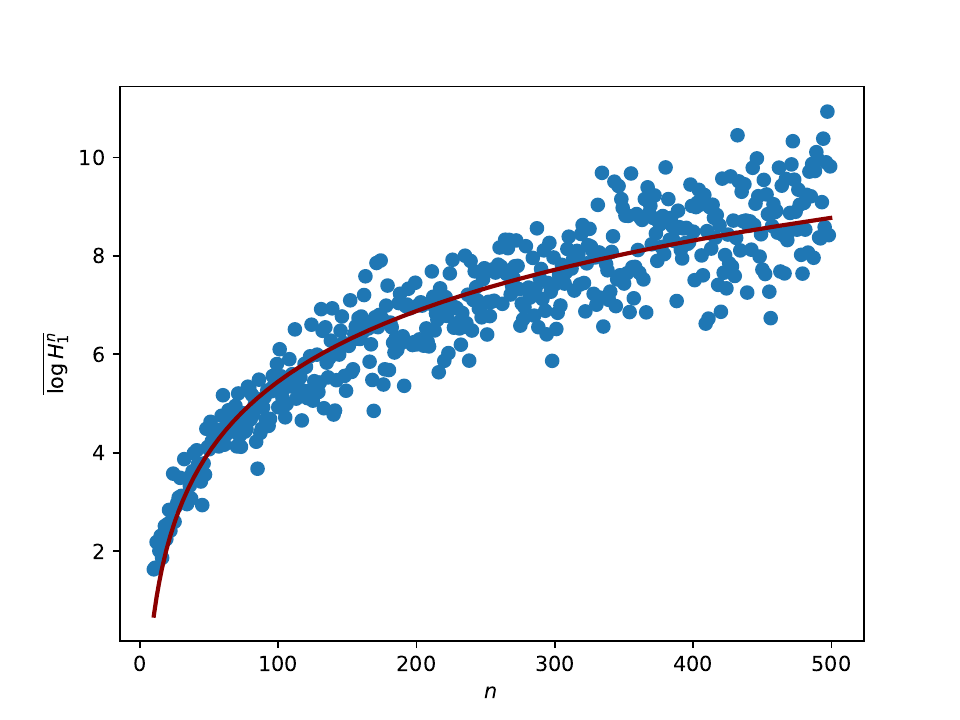}
        \caption{The estimator $\overline{\log H_1^n}$ versus $n$ where the words sampled are $\widetilde{\phi}_n$.}
        \label{fig: different r2}
    \end{subfigure}
    \hfill
    \begin{subfigure}[b]{0.45\textwidth}
        \centering
        \includegraphics[width=\textwidth]{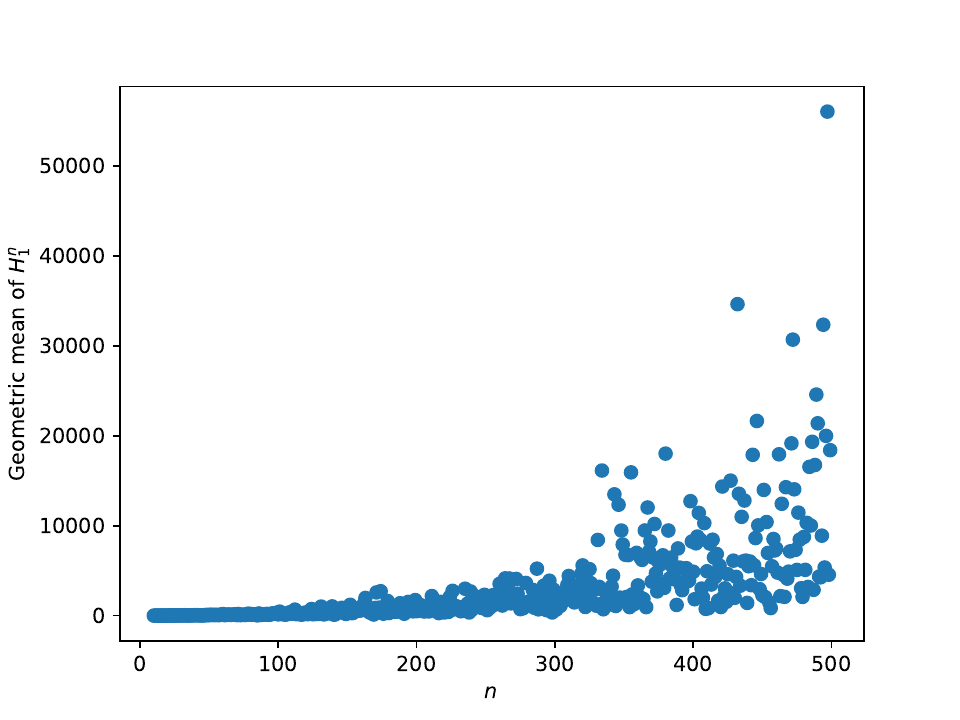}
        \caption{The estimated geometric mean of $H_1^n$ for different $n$.}
        \label{fig: geometric mean}
    \end{subfigure}

    \caption{Experiments with computing the estimator $\overline{H_1^n}$ for Heegaard splittings over the double torus with different sampling strategies. Averages were taken over 30 repetitions.}
\end{figure}

Due to the efficiency of our technique for input in power notation, we are naturally inclined to consider modifications to the sampling strategy used to compute $\overline{\log H_1^n}$ by drawing words from different distributions. 
For example, Figure \ref{fig: different r} shows the estimator $\overline{\log H_1^n}$ as a function of $n$, where we now draw words $\{\phi_{n,r}^i\}_{i=1}^N$ uniformly from $\mathcal{L}\times \llbracket -r, r\rrbracket$ for different choices of $r$. 
The distribution of $\phi_{n,1}$ is identical to $\phi_{n}$, described in the previous paragraph, and consequently, follows the linear relation described by Rivin.
However, it is noticeable that the relationship between $n$ and $\overline{\log H_1^n}$ is still linear, even for different values of $r$, although $\lambda_r$ grows as $\log(r)$ (Figure \ref{fig: lambdas as a function of r}). 
This logarithmic relationship should come as no surprise, given the direct dependency of $H_1$ on the minimal number of intersections between $\alpha$ and $\beta$ curves up to isotopies.

A more interesting modification to the sampling technique, this time denoted by $\{\widetilde{\phi}_{n}^i\}$, yields Figure \ref{fig: different r2}. 
For $\{\widetilde{\phi}_n^i\}$, we fix a value of $n$ and sample $(\tau_{s_j},k_j)$ uniformly from $\mathcal{L}\times\llbracket -n_j, n_j\rrbracket$, with the constraint that, if the word sampled equals $\tau_{s_\ell}^{k_\ell}\circ\dots\circ \tau_{s_1}^{k_1}$, then $\sum_j |k_j|=n$.
This distribution aims to mimic the random model, giving higher weights to words with more frequent repetitions. 
As indicated by the figure, $n$ and $\overline{\log H_1^n}$ seem to follow a logarithmic relation.
The estimated model is shown in red; its $R^2$ was found to be $0.86$.
Due to the logarithmic form, one is inclined to conjecture that there is a power-law relation between the geometric mean of $H_1^n$ and $n$ when sampled according to $\{\widetilde{\phi}_n^i\}$; nonetheless, as indicated by Figure \ref{fig: geometric mean}, even if such a relation exists, the distribution has heavy tails. 
We leave for future work a theoretical justification for these relations.

\section{Open questions towards a census of 3-manifolds}\label{sec: open questions} 
The strategy for generating splittings used in Section \ref{sec: examples and experiments} suggests a natural construction of a census of closed 3-manifolds containing elements that would otherwise be inaccessible to triangulation-based approaches. 
Being more explicit about the ``recipe'' presented in the introduction, one could
\begin{enumerate}
    \item~\label{item: postcensus1} Fix a genus $g$ and generate Heegaard words in power-notation form up to a maximum length, which are then used to compute diagrams;
    \item~\label{item: postcensus2} Greedly apply combinatorial moves (such as \texttt{GET EFFICIENT POSITION}, \texttt{DO A DISK SLIDE}, \texttt{DO DESTABILIZATION}, and \texttt{DETECT REDUCTION}) to simplify as much as possible the diagrams and detect some trivial, previously identified instances, such as stabilizations; 
    \item~\label{item: postcensus3} Use 3-manifold invariants to detect redundancies in the enumeration of spaces.
\end{enumerate}

A few points about these strategies deserve mention.
Instead of sampling Heegaard words as in item~\ref{item: postcensus1}, a more natural (and less studied) model would come from directly sampling the diagrams as normal coordinates, without regard for an input word.
Although normal coordinates respect some well-defined combinatorial constraints, efficiently restricting the multicurves to systems in the surface (without, for example, appealing to Theorem \ref{th: check diagram}) does not seem to be an easy task. 
Similar problems appear when sampling 3-manifolds from random triangulations, for which some asymptotic results are already known \cite{petri2022model}.

Now for item~\ref{item: postcensus3}, we note that the very efficient computations of (presentations of) the fundamental and homology groups in our data structure represent some advances, but the difficulty of the isomorphism problem for the first~\cite{aschenbrenner20153} and the lack of strong distinguishability of the second make them not ideal in most cases. 
Alternatively, the authors of~\cite{costantino2020kuperberg} suggest a presentation of the much distinguishable Kuperberg~\cite{kuperberg1991involutory} and Turaev-Viro~\cite{turaev2010quantum} invariants computed directly from Heegaard diagrams of 3-manifolds, but an algorithm on our data structure and its complexity for the cases of interest still need to be determined. 

Finally, in addition to the exploration of exponentially more spaces, a Heegaard splitting census can probably be used to address the well-reported difficulty of generating 3-manifolds with particular topological structures---such as hyperbolicity---observed in triangulation-based methods \cite{matveev2007algorithmic}.
This is because much of the structure of a 3-manifold is encoded by the \emph{Hempel distance} of its splittings, defined as a graph distance in the infinite complex of meridians of the involved handlebodies, loosely related to the number of intersections between $\alpha$ and $\beta$ \cite{hempel20013}. 
For example, manifolds that have splittings with a Hempel distance of at least 1 are irreducible \cite{kobayashi1988casson}, distance of at least 3 are hyperbolic \cite{hempel20013}, and distance of at least $k$ cannot embed incompressible orientable surfaces of the genus at most $2k$ \cite{hartshorn2002heegaard}. 
Bounds relating the hyperbolic volume \cite{hamenstadt2022stability,kobayashi2013hyperbolicvolumeheegaarddistance} and the Heegaard genus \cite{scharlemann2006alternate} with the Hempel distance are also known. 
There exists a well-known algorithm, due to Masur and Schleimer \cite{masur2013geometry}, to approximate the Hempel distance from a Heegaard diagram, which can probably be translated into our structure, but whose complexity is not clear. 
Alternatively, the authors of \cite{ennes2025hardnesscomputationquantuminvariants} give a polynomial-time algorithm that uses (uncompressed) intersection sequences of Heegaard splittings to increase the Hempel distance while keeping some quantum invariant fixed throughout the process, making it therefore possible to automatically generate sets of hyperbolic manifolds that are guaranteed not to be homeomorphic.

\bibliographystyle{plain}
\bibliography{compressed}

\appendix

\section{Extension to general systems}\label{app: extensions}
As mentioned in Remark \ref{rm: systems}, there is a generalization that allows systems to have more than $g$ components. 
In particular, for this section, we will assume that a \emph{system} is a multicurve of cardinality at least $g$ and with no isotopic components.
A standard Euler characteristic argument \cite{schleimer2006notes} implies that a system $\gamma$ cuts $\Sigma_g$ into $\#\gamma-g+1$ at least thrice punctured spheres.
When $\#\gamma=g$, we say that the system is \emph{minimal} and it reduces to the definition used in the main text. 
If, on the other hand, $\#\gamma = 3g-3$, we say that the system is \emph{maximal} or a \emph{pants decomposition} because it cuts $\Sigma_g$ into a union of three-times punctured spheres known as \emph{pairs of pants}. 

Like minimal systems, general systems uniquely define handlebodies. 
This means that the notion of equivalence naturally extends to the more general systems, even if they have different cardinality. 
In particular, one \cite{ennes2025hardnesscomputationquantuminvariants, johnson2006notes} can show that, provided that $\gamma$ is not a pants decomposition, the band sum $c$ of two punctures $c_1$ and $c_2$ and an arc $p$ connecting them in $\Sigma_g\backslash \gamma$ is a meridian of $\mathcal{H}_\gamma$ and is not isotopic to any other component of $\gamma$. 
This means that $\gamma'=\gamma\cup\{c\}$ forms a system which is equivalent to $\gamma$; we call $\gamma'$ an \emph{extension} of $\gamma$.

We can expand our data structure to allow for more general systems for $\alpha$ and $\beta$. 
In particular, a diagram $(\Sigma_g,\alpha,\beta)$ is \emph{minimal} if $\alpha$ and $\beta$ are minimal systems; similarly, it is \emph{maximal} if they are pants decompositions. 
In fact, although more resource-intensive, this general version of Heegaard diagrams has been used both in mathematical (for example, \cite{hempel20013, yoshizawa2014high}) and in computer science (\cite{ennes2025hardnesscomputationquantuminvariants}) settings.
All algorithms described in this paper can be adapted to this more general notion of Heegaard diagrams with only minor modifications.
We can, moreover, describe efficient methods to translate between minimal and more general systems.
This is exactly the content of the next two results. 

\begin{theorem}[\texttt{REDUCE TO MINIMAL}]
    Let $(T,E_T(\alpha),N_T(\beta))$ be a Heegaard diagram of complexity $m$. Then one can compute an equivalent minimal Heegaard diagram $(T,E_T(\alpha'),N_T(\beta'))$ in time $O(m|T|)$.
\end{theorem}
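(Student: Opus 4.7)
The plan is to independently reduce $\alpha$ and $\beta$ to minimal systems by deleting a carefully chosen subset of components in each. The structural observation underlying the algorithm is that a component $c$ of a system $\gamma$ can be removed to yield an equivalent system $\gamma\backslash \{c\}$ if and only if $c$ is separating in $\Sigma_g\backslash (\gamma\backslash \{c\})$. Indeed, form the \emph{dual graph} $G_\gamma$ whose vertices are the punctured-sphere pieces of $\Sigma_g\backslash \gamma$ and whose edges are the components of $\gamma$, each incident to the one or two pieces it borders. Removing an edge that is not a self-loop merges two pieces into a punctured sphere with at least $3+3-2=4$ punctures, preserving the system property; removing a loop, by contrast, would create a piece of positive genus. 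Moreover, the resulting subsystem remains equivalent to $\gamma$: its components are still meridians of $\mathcal{H}_\gamma$, and conversely $c$ bounds a disk in $\mathcal{H}_{\gamma\backslash\{c\}}$ by Jordan--Schoenflies inside the 3-ball obtained after cutting $\mathcal{H}_{\gamma\backslash\{c\}}$ along the disks bounded by $\gamma\backslash \{c\}$ (this is exactly the reverse of the extension move of Appendix \ref{app: extensions}). Since $G_\gamma$ is connected with $\#\gamma-g+1$ vertices and $\#\gamma$ edges, any spanning tree has exactly $\#\gamma-g$ edges, the precise number of components to delete to reach a minimal system; moreover, contracting a spanning tree all at once is valid since its edges remain non-loops throughout the iterative contraction.

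For $\alpha$, given edged, a standard graph traversal on the triangulation $T$---with edges in $E_T(\alpha)$ treated as impassable---identifies the pieces of $\Sigma_g\backslash \alpha$ in time $O(|T|)$ and simultaneously yields the dual graph $G_\alpha$ with its edges labelled by components of $\alpha$. Finding a spanning tree and removing the corresponding edge lists from $E_T(\alpha)$ takes another $O(|T|)$.

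For $\beta$, I would first invoke Theorem \ref{th: erickson}(a) to compute the street complex $S(T,\beta)$ in time $O(m|T|)$, which supplies a cellular decomposition of $\Sigma_g$ refining both $T$ and $\beta$, of total complexity $O(|T|)$. A graph traversal on the faces of $S(T,\beta)$, forbidding the crossing of arcs belonging to $\beta$, identifies the pieces of $\Sigma_g\backslash \beta$ in time $O(|T|)$; for each component-labelled arc of $\beta$ in the street complex, the two pieces it borders are then read off, producing the dual graph $G_\beta$. A spanning tree is found by standard search, and the corresponding components are deleted simply by dropping the associated vectors from the per-component storage $N_T(\beta)=\{N_T(b)\}_{b\in\beta}$. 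The total time is dominated by the street-complex construction: $O(m|T|)$.

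The main point of care is ensuring that each arc of $S(T,\beta)$ lying on $\beta$ carries a label identifying its originating component of $\beta$, so that the edges of $G_\beta$ can be correctly labelled. Since the construction of Theorem \ref{th: erickson}(a) naturally traces $\beta$ and can be applied component-by-component (or, equivalently, carrying labels throughout a single sweep), this bookkeeping adds no asymptotic cost, keeping everything within the $O(m|T|)$ budget.
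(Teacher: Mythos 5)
Your proposal is correct and follows essentially the same route as the paper's proof: compute the street complex $S(T,\beta)$, form the dual graph of the complementary pieces, and delete the components corresponding to a breadth-first spanning tree, handling $\alpha$ analogously but directly on $T$. You in fact make precise a point the paper's one-line proof glosses over---one must delete exactly a spanning tree's worth of non-loop edges of the dual graph (not every component bordering two distinct pieces), and you justify why the resulting minimal system remains equivalent---so the argument is, if anything, more complete than the paper's.
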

\begin{proof}
    Compute $S(T,\beta)$ and delete the separating components of $\beta$ by using breadth-first search in the dual graph. 
    In total, this takes time $O(m|T|+g|T|)=O(m|T|)$ since $O(g)=O(|T|)=O(\sqrt{m})$. 
    Similarly, delete any separating curve of $\alpha$ in time $O(g|T|)$. 
\end{proof}

\begin{theorem}[\texttt{EXTEND TO MAXIMAL}]\label{th: extend to maximal}
    Let $(T',E_{T'}(\alpha),N_{T;}(\beta))$ be a Heegaard diagram of complexity $m$. Then one can compute an equivalent maximal Heegaard diagram $(T,E_T(\alpha'),N_T(\beta'))$ in time $O(gm|T|)$.
\end{theorem}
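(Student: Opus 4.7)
The plan is to extend $\alpha$ and $\beta$ to equivalent pants decompositions separately by iteratively applying the band-sum construction recalled at the start of this appendix. The extension of $\alpha$ is straightforward because $\alpha$ is edged, whereas extending $\beta$ requires the street-complex machinery of Section~\ref{sec: algorithms}.

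For $\alpha$: I would identify the pieces of $\Sigma_g \backslash \alpha$ and count their punctures from the dual graph of $T$ restricted by $E_T(\alpha)$. For each piece with $k > 3$ punctures, I would iteratively find a shortest dual-graph arc between two of the punctures, add an edged band-sum curve of length $O(|T|)$, and recurse on the resulting smaller piece. Since at most $O(g)$ curves are added and each BFS costs $O(|T|)$, the total cost for $\alpha$ is $O(g|T|)$.

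For $\beta$: I would first compute $S(T, \beta)$ in time $O(m|T|)$ via Theorem~\ref{th: erickson}(a); the piece structure of $\Sigma_g \backslash \beta$ is then available as in the proof of \texttt{CHECK DIAGRAM}. For each piece $P$ with $k > 3$ punctures, I would iteratively select two punctures, that is, two components $c_1, c_2$ of $\beta$ bounding $P$, find a shortest arc $p$ between them in the dual of the current street complex restricted to $P$, and form the band-sum curve via the SLP
\begin{equation*}
I_T(c) = I_T(c_2)^{-1} \circ I_T(p)^{-1} \circ I_T(c_1) \circ I_T(p),
\end{equation*}
exactly as in the proof of \texttt{DO A DISK SLIDE}. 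The shortest-path property ensures $c$ is normal in $T$, and Lemma~\ref{lm: slp}(c) recovers its normal coordinates in time $O(m|T|)$. Because at most $O(g)$ curves are added and each costs $O(m|T|)$, the total cost for $\beta$ is $O(g m |T|)$.

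The main obstacle is keeping every iteration for $\beta$ within $O(m|T|)$ despite the fact that each newly added curve changes both the street complex and the piece structure. My plan is to avoid rerunning Theorem~\ref{th: erickson} at each step by performing all path-finding inside the initial $S(T, \beta)$: within each piece I would pre-select a tree of disjoint dual-graph arcs from a fixed puncture to the others, ordered so that the corresponding band-sum curves are nested and therefore automatically pairwise disjoint. A careful ordering then yields a pants decomposition of each piece in a single pass, with only the $O(g)$ SLP-to-normal-coordinate conversions via Lemma~\ref{lm: slp}(c) contributing to the dominant $O(g m |T|)$ cost.
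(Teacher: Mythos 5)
Your overall route is the paper's: iterated band sums of boundary components of the complementary punctured sphere, with the connecting arcs found as shortest paths in the dual graph of the street complex, the new curves assembled as SLPs $I_T(c_2)^{-1}\circ I_T(p)^{-1}\circ I_T(c_1)\circ I_T(p)$ exactly as in \texttt{DO A DISK SLIDE}, and the $O(gm|T|)$ bound coming from $O(g)$ SLP-to-normal-coordinate conversions. Where you genuinely diverge is in how the iteration is organized. The paper does one band sum at a time and, after each step, \emph{updates} the cellular embedding: the edges and vertices of $I_T(d_i)$ are inserted and the dual-graph edges transverse to them are blocked, so that the next BFS automatically respects all previously constructed curves while still touching only $O(|T|)$ dual edges. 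The new curve is also returned to the pool of ``available'' punctures, so the recursion is a daisy chain rather than a star based at a fixed puncture. Your alternative --- precomputing a tree of arcs in the initial $S(T,\beta)$ and taking nested band sums in a single pass --- is plausible, but it is exactly the step you leave as a plan: you would need to argue that the arcs of the tree can be realized disjointly (arcs of a dual spanning tree share edges, so they must be pushed apart, which multiplies crossings), that the nested curves are pairwise disjoint and normal, and that the SLP of the $k$-th nested curve does not accumulate complexity $O(km|T|)$, which would degrade the final bound to $O(g^2m|T|)$ unless you exploit variable reuse carefully. The paper's incremental update sidesteps all of this.

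There is also one concrete gap independent of the above: you extend $\alpha$ and $\beta$ ``separately'' but never reconcile them on a common triangulation, which the output format requires. The new $\alpha$ curves run along dual-graph arcs, so they are \emph{not} edged in $T$; the paper therefore retriangulates to a $T'$ with $|T'|=O(|T|)$ in which the extended $\alpha'$ is edged, converts $N_T(\beta)$ to $N_{T'}(\beta)$ via corner coordinates in time $O(g|T|)$, and only then extends $\beta$ inside $S(T',\beta)$. Without this step your two extensions live over different complexes and the claimed output $(T,E_T(\alpha'),N_T(\beta'))$ is not actually produced.
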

\begin{proof}
    Without loss of generality, suppose $(T,E_T(\alpha),E_T(\beta))$ is a minimal diagram; we want to extend each multicurve to a pants decomposition. 
    Both cases can be treated similarly thanks to the following lemma.

\begin{lemma}\label{lm: get pants}
    Suppose $\gamma$ is an edged minimal system in a cellular embedding $T$ of a surface $\Sigma_g$. Then one can find a pants decomposition $\gamma'$, equivalent to $\gamma$, in time $O(g|T|)$.
\end{lemma}
\begin{proof}
    Note that $\Sigma_g\backslash \gamma$ is homeomorphic to the $2g$ punctured sphere, so let $d_1,\dots,d_{2g}$ represent the boundary components of this surface.
    We will find some $2g-3$ new separating curves $d_{2g+1},\dots, d_{3g-3}$ in $\Sigma_g$ that recursively define pairs of pants as follows.
    In step $i$, choose a curve $d_{g+j}, 0\leq j\leq i-1$ from a list of available curves. 
    Call this choice $s_1$ and pick a face $t$ adjacent to $s_1$. 
    Using breadth-first search, we find, in time $O(|T|)$, the shortest path $p$ in the dual graph starting from $t$ to any face adjacent to one of the available curves.
    Call the closest available curve $s_2$ and remove $s_1$ and $s_2$ from the list of available curves. 
    Note that we can encode $p$ as a list of faces of $T$.
    In time $O(\|E_T(s_1)\|)=O(|T|)$, we update $p$ by iteratively removing all faces adjacent to $s_1$ except for the last one.
    By construction, $p$ intersects each edge of the cellular complex at most once and does not intersect any face adjacent to an available curve, except for exactly one face adjacent to $E_T({s_1})$ and one face adjacent to $E_T({s_2})$.
    
    As in the proof of Theorem \ref{th: do a disk slide}, make $I_T(d_i)=I_T(s_2)^{-1}\circ I_T(p)^{-1}\circ I_T(s_1)\circ I_T(p)$, where $p$ is potentially trivial, and add $d_i$ to the list of available curves. 
    The curve $d_i$ separates a thrice-punctured sphere component with punctures defined by $s_1$, $s_2$, and $d_i$, from a $2g-i$ punctured sphere component, and which component is which can be identified in time $O(|T|)$.
    We then update $T$ to a cellular embedding of $\Sigma_0^{2g+i}$ by adding the edges and vertices defined by $I_T(d_i)$ to the graph and blocking paths in the dual graph transverse to them.
    While $T$ becomes more complicated at each step, because the arc $p$ is constructed to intersect a constant number of faces adjacent to the available curves and the curves $d_{g+j}$, $1\leq j\leq i-1$, are separating, at each step, breadth-first needs only to consider $O(|T|)$ many edges and vertices of the dual graph of $T$. 
    The algorithm terminates when $i=3g-3$, in which case we have defined, by construction, a maximal system $\gamma = \{c_1,\dots,c_g,d_{g+1},\dots,d_{3g-3}\}$.
\end{proof}

We now apply Lemma \ref{lm: get pants} to get some pants decomposition equivalent to $\alpha$, edged in a new triangulation $T'$. 
Because $|T'|=O(|T|)$, we can use the discussion of corner coordinates to find the coordinates $N_{T'}(\beta)$ in time $O(g|T|)$. 
We then compute some edged pants decomposition equivalent to $\beta$ in $S(T',\beta)$. 
Finally, by the very proof of Lemma \ref{lm: get pants} and part (b) of Theorem \ref{th: erickson}, this gives the normal coordinates of the new $\beta$ curves with respect to $T'$.
\end{proof}

\begin{remark}
    Theorem \ref{th: extend to maximal} \emph{does not} guarantee that there exists a $\phi\in\Mod(\Sigma_g)$ such that $\beta'=\phi(\alpha')$ where $\alpha'$ and $\beta'$ are pants decompositions.
    In fact, it is a well-known result from surface theory \cite{moser2014upper} that such a homeomorphism exists if and only if the \emph{pants decomposition graphs} associated with $\alpha'$ and $\beta'$ are isotopic.
    One can easily imagine a modification to the algorithm of the theorem that constructs a pants decomposition $\beta'$ that defines a graph isotopic to the graph defined by $\alpha'$ by forcing arcs between particular punctures of $\Sigma_g\backslash \alpha'$ as in Theorem \ref{th: do a disk slide}, but that would imply an algorithm exponential on $|T|$. Alternatively, a method based on Hatcher and Thurston's proof of connectivity of the \emph{pants graph} (not to be confused with the pants decomposition graph)~\cite{hatcher1980presentation}[Appendix] can be devised, but efficiency in $g$ is not trivial.
\end{remark}

\section{Details on the implementations of Section \ref{sec: examples and experiments}}\label{app: details}

\begin{figure}
    \centering
    \includegraphics[width=0.65\textwidth]{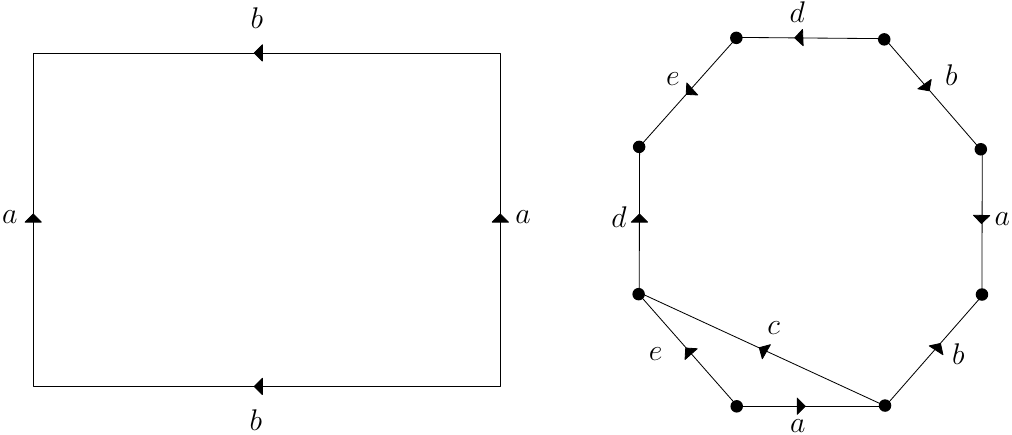}
     \caption{The cellular embeddings $T_g$ for $g=1$ (left) and $g=2$ (right) with the choices of edge labels agreeing with \texttt{Twister}.}
  \label{fig: Tg}
\end{figure}

As mentioned in the main text, several simplifying assumptions were made for the implementation of the examples in Section \ref{sec: examples and experiments} and Appendix \ref{sec: genus 2}. 
First and most importantly, we \emph{did not} compute a data structure for the Heegaard diagrams as described in Section \ref{sec: data structure}. 
Instead, we use a general cellular embedding $T_g$ given by the fundamental polygons of $\Sigma_g$ \cite{hatcher2002algebraic} to represent the surfaces, which implies that, instead of normal coordinates, we assume the $\beta$ curves to be given by SLPs $I_T(\beta)$.
Although this deviates from what was described in this paper, we note that cyclically reducing the SLPs is irrelevant to finding $H_1(M)$ (or $\pi_1(M)$) and introduces some unbearable computational cost.
We also assumed the Lickorish generators to be edged with respect to $T_g$ and that their intersection sequences are computed only once, say as a preprocessing step, using Lemma \ref{lm: get twins} (in fact, for these experiments, we computed the SLPs by hand).
Figure \ref{fig: Tg} shows $T_g$, for $g=1$ and $2$, while Table \ref{tb: implementation} indicates the edges and (uncompressed) intersection sequences of the curves involved. 

\begin{table}[h!]
\centering
\begin{tabular}{ |c|c|c|c|c| } 
\hline
Genus & $E_{T_g}(\alpha)$ & $I_{T_g}(\alpha)$ &$E_{T_g}(\mathcal{L})$ & $I_{T_g}(\mathcal{L})$ \\[0.6 ex]
\hline

 &&&&\\
$g=1$ & $[ a]$ & $ b$
& $[ b]$ & $a^{-1}$\\
 &&&&\\
\hline
 &&&&\\
 & \multirow{3}{*}{{\shortstack{$[ a ]$\\\vspace{1ex}$[ e ]$}}}  & \multirow{3}{*}{{\shortstack{$b$ \\\vspace{1ex}$d^{-1}$}}} &\multirow{3}{*}{\shortstack{$[ b ]$\\$[ c ]$\\$[ d ]$}} & \multirow{3}{*}{\shortstack{$c^{-1}\cdot a $\\$ b\cdot d $\\$ c^{-1}\cdot e^{-1} $}}\\ 
$g=2$&  &  & &\\ 
&  &  & &\\ 
 &&&&\\
\hline
\end{tabular}
\caption{Here, we use $[ \cdot ]$ to denote lists. We assume that the $\alpha$ curves are contained in the Lickorish generators, meaning that they are omitted in the fourth and fifth columns.}
\label{tb: implementation}
\end{table}

\end{document}